\newif\iffsen
\newcommand{\ie}{\emph{i.e.}}
\newcommand{\eg}{\emph{e.g.}}
\newcommand{\whenphase}{\textit{when} \textit{phase}\xspace}
\newcommand{\actionphase}{\textit{action} \textit{phase}\xspace}
\newcommand{\whenclause}{\textit{when} \textit{clause}\xspace}
\newcommand{\whenclauses}{\textit{when} \textit{clauses}\xspace}
\newcommand{\actionclause}{\textit{action} \textit{clause}\xspace}
\newcommand{\actionclauses}{\textit{action} \textit{clauses}\xspace}
\newcommand{\FF}{\ensuremath{\mathcal{F}}\xspace}
\newcommand{\MM}{\ensuremath{\mathcal{M}}\xspace}
\begin{document}

\title{Analysing the Control Software of the\\
Compact Muon Solenoid Experiment\\ at the Large Hadron Collider}
\author{Yi-Ling Hwong\inst{1}\thanks{This work has been supported in part by a Marie Curie Initial Training
Network Fellowship of the European Community's Seventh framework program
under contract number (PITN-GA-2008-211801-ACEOLE).}
 \and Vincent J.J. Kusters\inst{1,2}
\and Tim A.C. Willemse\inst{2}}
\institute{
CERN, European Organization for Nuclear Research,\\
CH-1211 Geneva 23, Switzerland
\and
Department of Mathematics and Computer Science,\\
Eindhoven University of Technology\\
P.O.~Box~513, 5600~MB Eindhoven, The Netherlands
}

\maketitle

\begin{abstract}
  The control software of the CERN Compact Muon Solenoid experiment
  contains over 30,000 finite state machines. These state machines are
  organised hierarchically: commands are sent down the hierarchy and
  state changes are sent upwards. The sheer size of the system makes it
  virtually impossible to fully understand the details of its behaviour at
  the macro level. This is fuelled by unclarities that already exist at the
  micro level.  We have solved the latter problem by formally describing
  the finite state machines in the mCRL2 process algebra. The translation
  has been implemented using the \emph{ASF+SDF meta-environment}, and
  its correctness was assessed by means of simulations and visualisations
  of individual finite state machines and through formal verification of
  subsystems of the control software.  Based on the formalised semantics
  of the finite state machines, we have developed dedicated tooling
  for checking properties that can be verified on finite state machines
  in isolation.

  \end{abstract}

\section{Introduction}

The Large Hadron Collider (LHC) experiment at the European Organization
for Nuclear Research (CERN) is built in a tunnel 27 kilometres in
circumference and is designed to yield head-on collisions of two proton
(ion) beams of 7 TeV each.  The Compact Muon Solenoid (CMS) experiment
is one of the four big experiments of the LHC. It is a general purpose
detector to study the wide range of particles and phenomena produced
in the high-energy collisions in the LHC. The CMS experiment is made
up of 7 subdetectors, with each of them designed to stop, track or
measure different particles emerging from the proton collisions. Early 2010,
it achieved its first successful 7 TeV collision, breaking
its previous world record, setting a new one.

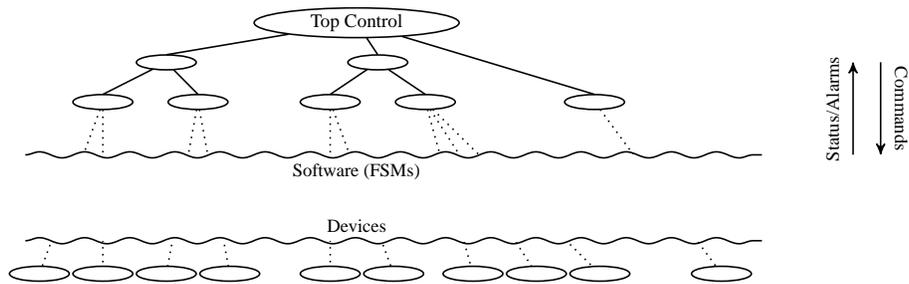
\begin{figure}[h]
\noindent
\begin{tikzpicture}[>=stealth',draw,node distance=60pt, semithick, inner sep=0pt, inner xsep=20pt, minimum size=0pt]

\node[xscale=0.5,yscale=0.5,state,shape=ellipse, ] (l0) {\Large Top Control};

\node[xscale=0.4,yscale=0.25,state,shape=ellipse] (k0) [below of=l0, xshift=-180pt] {};
\node[xscale=0.4,yscale=0.25,state,shape=ellipse] (k1) [below of=l0, xshift=20pt] {};

\node[xscale=0.4,yscale=0.25,state,shape=ellipse] (m0) [below of=k0, xshift=-60pt] {};
\node[xscale=0.4,yscale=0.25,state,shape=ellipse] (m1) [below of=k0, xshift=30pt] {};
\node[xscale=0.4,yscale=0.25,state,shape=ellipse] (m2) [below of=k1, xshift=-45pt] {};
\node[xscale=0.4,yscale=0.25,state,shape=ellipse] (m3) [below of=k1, xshift=45pt] {};
\node[xscale=0.4,yscale=0.25,state,shape=ellipse] (k2) [right of=m3, xshift=100pt] {};

\node[xscale=0.4,yscale=0.25,state,shape=ellipse] (n0) [below of=m0, yshift=-200pt,xshift=-60pt] {};
\node[xscale=0.4,yscale=0.25,state,shape=ellipse] (n1) [below of=m0, yshift=-200pt,xshift=-0pt] {};
\node[xscale=0.4,yscale=0.25,state,shape=ellipse] (n2) [below of=m1, yshift=-200pt,xshift=-30pt] {};
\node[xscale=0.4,yscale=0.25,state,shape=ellipse] (n3) [below of=m1, yshift=-200pt,xshift=30pt] {};
\node[xscale=0.4,yscale=0.25,state,shape=ellipse] (n5) [below of=m2, yshift=-200pt,xshift=0pt] {};
\node[xscale=0.4,yscale=0.25,state,shape=ellipse] (n6) [below of=m2, yshift=-200pt,xshift=60pt] {};
\node[xscale=0.4,yscale=0.25,state,shape=ellipse] (n7) [below of=m3, yshift=-200pt,xshift=45pt] {};
\node[xscale=0.4,yscale=0.25,state,shape=ellipse] (n8) [below of=m3, yshift=-200pt,xshift=105pt] {};
\node[xscale=0.4,yscale=0.25,state,shape=ellipse] (n9) [below of=m3, yshift=-200pt,xshift=165pt] {};
\node[xscale=0.4,yscale=0.25,state,shape=ellipse] (n10) [below of=m3,yshift=-200pt, xshift=280pt] {};
\node[yscale=0.25,draw=none] (p0) [above of=n0,xshift=15pt,yshift=-10pt] {};
\node[yscale=0.25,draw=none] (p1) [above of=n10,xshift=-5pt,yshift=-10pt] {};
\node[yscale=0.25,draw=none] (p2) [above of=n0, yshift=120pt,xshift=15pt] {};
\node[yscale=0.25,draw=none] (p3) [above of=n10, yshift=120pt,xshift=-5pt] {};
\node[yscale=0.25,draw=none,shape=ellipse, inner xsep=5pt] (q) [right of=p3] {};
\node[yscale=0.25,draw=none,shape=ellipse,inner xsep=5pt] (r) [above of=q,yshift=80pt] {};

\path[draw] 
 (l0) edge (k0.north)
 (l0) edge (k1.north)
 (l0) edge (k2.north)

 (k0) edge (m0.north)
 (k0) edge (m1.north)
 (k1) edge (m2.north)
 (k1) edge (m3.north)
 (k2) edge[dotted] (n10)
 (m0) edge[dotted] (n0)
 (m0) edge[dotted] (n1)
 (m1) edge[dotted] (n2)
 (m1) edge[dotted] (n3)
 (m2) edge[dotted] (n5)
 (m2) edge[dotted] (n6)
 (m3) edge[dotted] (n7)
 (m3) edge[dotted] (n8)
 (m3) edge[dotted] (n9)

;
           
\fill[fill=white] (p0.west) -- (p1.east) -- (p3.east) -- (p2.west) -- (p0.west);

\draw [decorate, decoration={snake,amplitude=.4mm, segment length=6mm, post length=0mm}] (p0.west) -- (p1.east);
\draw [decorate, decoration={snake,amplitude=.4mm, segment length=6mm, post length=0mm}] (p2.west) -- (p3.east);

\draw (l0|-p2) node [below,yshift=-3pt] {\scriptsize Software (FSMs)}
      (l0|-p0) node [above,yshift=3pt] {\scriptsize Devices};

\path[->] 
(q.north west) edge node[above,sloped,yshift=5pt] {\scriptsize Status/Alarms} (r.north west) ;
\path[->]
(r.north east) edge node[above,sloped,yshift=5pt] {\scriptsize Commands} (q.north east)
;

\end{tikzpicture}

\caption{Architecture of the real-time monitoring and control system of the CMS
experiment, running at the LHC.}

\label{fig:SMI}
\end{figure}

The control, configuration, readout and monitoring of hardware devices and
the detector status, in particular various kinds of environment variables
such as temperature, humidity, high voltage, and low voltage, are carried
out by the Detector Control System (DCS).  The control software of the
CMS detector is implemented with the Siemens commercial Supervision,
Control And Data Acquisition (SCADA) package PVSS-II and the CERN Joint
Controls Project (JCOP) framework \cite{Holme:907906}. The architecture
of the control software for all four big LHC experiments is based on the
SMI++ framework \cite{franek2005smi++,gaspar1998smi++}. Under the SMI++
framework, the real world is viewed as a collection of objects behaving
as finite state machines (FSMs). These FSMs are described using the State
Manager Language (SML).  A characteristic of the used architecture is the
regularity and relatively low complexity of the individual FSMs and device
drivers that together constitute the control software; the main source of
complexity is in the cooperation of these FSMs. Cooperation is strictly
hierarchical, consisting of several layers; see Figure~\ref{fig:SMI}
for a schematic overview. The FSMs are organised in a tree structure
where every node has one parent and zero or more children, except for
the top node, which has no parent. Nodes communicate by sending commands
to their children and state updates to their parents, so commands are
refined and propagated down the hierarchy and status updates are sent
upwards. Hardware devices are typically found only at the bottom-most
layer.

The FSM system in the CMS experiment contains over 30,000 nodes.
On average, each FSM contains 5 logical states.  Based on our early
experiments with some subsystems, we believe that
10$^\text{30,000}$ states is a very conservative estimate of the
size of the state space for the full control system.
The sheer size of the system significantly
contributes to its complexity.  Complicating factors in understanding the
behaviour of the system are the diversity in the development philosophies
in subgroups responsible for controlling their own subdetectors,
and the huge amount of parameters to be monitored.  In view of this
complexity, it is currently impossible to trace the root cause of
problems when unexpected behaviours manifest themselves. A single
badly designed FSM may be sufficient to lead to a livelock, resulting
in non-responsive hardware devices, potentially ruining expensive and
difficult experiments. Considering the scientific importance of these
experiments, this justifies the use of rigorous methods for understanding
and analysing the system.

Our contributions are twofold. First, we have formalised SML by mapping
its language constructs onto constructs in the process algebraic language
mCRL2~\cite{mcrl2ref}.  Second, based on our understanding of the
semantics of SML, we have identified properties that can be verified for
FSMs in isolation, and for which we have developed dedicated verification
tooling.

Using the ASF+SDF meta-environment~\cite{brand-meta}, we have developed
a prototype translation implementing our mapping of SML to mCRL2.
This allowed us to quickly assess the correctness of the translation
through simulation and visualisation of FSMs in isolation, and by means of
formal verification of small subsystems of the control software, using the
mCRL2 toolset.  The feedback obtained by the verification and simulation
enabled us to further improve the transformation. The use of the ASF+SDF
meta-environment allowed us to repeat this cycle in quick successions,
and, at the same time, maintain a formal description of the translation.
Although the ASF+SDF Meta Environment development was discontinued in
2010, we chose it over similar products as ATL because we were already
familiar with it and because its syntax-driven, functional approach
results in very clear translation rules.

Our dedicated verification tools allow the developers at CERN to quickly
perform behavioural sanity checks on their design, and use the feedback
of the tools to further improve on their designs in case of any problems.
Results using these tools so far are favourable: with only a fraction
of the total number of FSMs inspected so far, several problems have
surfaced and have been fixed.

\paragraph{Outline} We give a cursory overview of the core of the SML
language in Section~\ref{sec:cmsfsm}. The mCRL2 semantics of this core
are then explained in Section~\ref{sec:formalisation},
and we briefly elaborate on the methodology we used for obtaining this
semantics.  Our dedicated verification tools for SML, together with
some of the results obtained so far, are described in further detail
in Section~\ref{sec:dedicated_tooling}. We summarise our findings and
suggestions in Section~\ref{sec:conclusions}.

\section{The State Manager Language}
\label{sec:cmsfsm}

The finite state machines used in the CMS experiment are described in the
State Manager Language (SML) \cite{franek2005smi++,gaspar1998smi++}. We
present the syntax and the suggested meaning of the core of the language
using snapshots of a running example; we revisit this example in our
formalisation in Section~\ref{sec:formalisation}. Note that in reality,
SML is larger than presented here, but the control system is made up
largely of FSMs employing these core constructs only.

\begin{codelisting}
\begin{verbatim}
class: $FWPART_$TOP$RPC_Chamber_CLASS
    state: OFF
        when ( ( $ANY$FwCHILDREN in_state ERROR ) or
               ( $ANY$FwCHILDREN in_state TRIPPED ) )  move_to ERROR

        when ( $ANY$RPC_HV in_state {RAMPING_UP,
                                     RAMPING_DOWN} ) move_to RAMPING
        when ( ( $ALL$RPC_LV in_state ON ) and
               ( $ALL$RPC_HV in_state STANDBY ) )  move_to STANDBY

        when ( ( $ALL$RPC_HV in_state ON ) and
               ( $ALL$RPC_LV in_state ON ) )  move_to ON

        when ( ( $ALL$FwCHILDREN in_state ON ) and
               ( $ALL$RPC_T in_state OK ) )  move_to ON

        action: STANDBY
            do STANDBY $ALL$RPC_HV
            do ON $ALL$RPC_LV
        action: OFF
            do OFF $ALL$FwCHILDREN
        action: ON
            do ON $ALL$FwCHILDREN
\end{verbatim}
  \caption{Part of the definition of the \emph{Chamber} class in SML.}
  \label{code:chamber}
\end{codelisting}

Listing~\ref{code:chamber} shows part of the definition of a \emph{class}
in SML. Conceptually, this is the same kind of class known from
object-oriented programming: the class is defined once, but can be
instantiated many times. An instantiation is referred to as a
Finite State Machine.  A class consists of one or more \emph{state
clauses}; Listing~\ref{code:chamber} only shows the state clause for the
\texttt{OFF} state. Intuitively, a state clause describes how the FSM should
behave when it is in a particular state. Every state clause consists
of a list of \emph{when clauses} and a list of \emph{action clauses},
either of which may be empty.

A \whenclause has two parts: a \emph{guard} which is a Boolean expression
over the states of the children of the FSM and a \emph{referer} which
describes what should happen if the guard evaluates to true. The base
form of a guard is {\tt P in\_state S}, where \texttt{S} is the name of a state
(or a set of state names) and \texttt{P} is a \emph{child pattern}. A child
pattern consists of two parts: the first part is either \texttt{ANY} or
\texttt{ALL} and the second part is the name of a class or the literal
\texttt{FwCHILDREN}. The intended meaning is straightforward:

\begin{quote}
\texttt{\$ALL\$FwCHILDREN
  in\_state ON}
\end{quote}
means ``all children are in the \texttt{ON} state'', and:
\begin{quote}
\texttt{\$ANY\$RPC\_HV in\_state \{RAMPING\_UP,} \texttt{RAMPING\_DOWN\}}
\end{quote}
evaluates to true if ``some child of class \texttt{RPC\_HV} is either in state
\texttt{RAMPING\_UP} or state \texttt{RAMPING\_DOWN}''.

A referer is either of the form \texttt{move\_to S}, indicating that the
finite state machine changes its state to \texttt{S}, or of the form
\texttt{do A}, indicating that the action with name \texttt{A} should
be executed next. If the guards of more than one \whenclause evaluate
to true, the topmost enabled referer is executed. Whenever the FSM
moves to a new state, it executes the \whenclauses, starting from the
top \whenclause, to see if it should stay in this state (all guards are
false) or if it should go to another state (some guard is true). It is
therefore possible that a single \texttt{move\_to} referer or statement
(see below) triggers a series of state changes.

An \actionclause consists of a \emph{name} and a list of
\emph{statements}. When an FSM receives a command while in a state
\texttt{S}, it looks inside the state clause of state \texttt{S}
for an \actionclause with the same name as the command and if such an
\actionclause exists, it executes its statement list. If no such action
exists, the command is ignored. For example, if the \emph{Chamber}
finite state machine from Listing~\ref{code:chamber} is in state
\texttt{OFF} and it receives an \texttt{ON} command, it will execute
the last \actionclause.

The most commonly used statement is \texttt{do C P}, which means that
the command \texttt{C} is sent to all children which match the child pattern
\texttt{P}. After a command is sent, the child is marked \emph{busy}. When a
child sends its new state back, this \emph{busy} flag is removed. The
\texttt{do} statement is non-blocking, \ie, it does not wait for the
children to respond with their new state. The child pattern always
starts with \texttt{\$ALL\$} in this context. SML also provides
\texttt{if} and \texttt{move\_to} statements, as we illustrated in
Listing~\ref{code:statements}.

\begin{codelisting}
\begin{verbatim}
action: STANDBY
    do STANDBY $ALL$RPC_HV
    do ON $ALL$RPC_LV
    if ( $ALL$RPC_LV in_state ON ) then
        do ON $ALL$RPC_HV
        do ON $ALL$RPC_LV
        if ( $ALL$RPC_HV in_state ON ) then
            do ON $ALL$RPC_HV
            move_to ON
        endif
    else
        do STANDBY $ALL$RPC_LV
        do STANDBY $ALL$RPC_HV
        do STANDBY $ALL$FwCHILDREN
    endif
\end{verbatim}
  \caption{An example of a more complex \actionclause.}
  \label{code:statements}
\end{codelisting}

The \texttt{move\_to S} statement immediately stops execution of the
\actionclause and causes the FSM to move to the \texttt{S} state. The \texttt{if
G then S1 else S2 endif} statement blocks as long as there is a child,
referred to in \texttt{G}, that has a busy flag. If the guard \texttt{G}
evalutates to true, then \texttt{S1} is executed and otherwise \texttt{S2}
is executed. The else clause is optional.

\section{A Formal Semantics for SML} \label{sec:formalisation}

We use the process algebra mCRL2~\cite{mcrl2ref} to formalise the
semantics of programs written in SML.
\iffsen The formal translation of SML into mCRL2 
can be found in~\cite{HKW:11}.
\else
The formal translation of SML into mCRL2 can be found in the appendices.
\fi

 Our choice for mCRL2 is motivated
largely by the expressive power of the language, its rich data language
rooted in the theory of Abstract Data Types, its available tool support,
and our understanding of the advantages and disadvantages of mCRL2.
Before we address the translation of SML to mCRL2, we briefly
describe the mCRL2 language.

\subsection{A Brief Overview of mCRL2}

The mCRL2 language consists of two distinct parts: a \emph{data
language} for describing the data transformations and data types, and
a \emph{process language} for specifying system behaviours. For
a comprehensive language tutorial, we refer to \url{http://mcrl2.org}.

The data language, which is rooted in the theory of \emph{abstract data
types}, includes built-in definitions for many of the commonly used
data types, such as Booleans, Integers, Natural numbers, \emph{etc.},
and allows users to specify their own data sorts. In addition, container
sorts, such as \emph{lists}, \emph{sets} and \emph{bags} are available.

The process specification language of mCRL2 consists of only a small
number of basic operators and primitives. The language is inspired
by process algebras such as ACP~\cite{BBR:10}, and has both an
axiomatic and an operational semantics. 

A set of (parameterised) actions are used to model atomic, observable
events. Processes are constructed compositionally: 
the non-deterministic choice between processes \texttt{p}
and \texttt{q} is denoted \texttt{p+q}; their sequential composition
is denoted \texttt{p.q}, and their parallel composition is denoted
\texttt{p||q}. In addition, there are facilities to enforce
communication between different actions and abstracting from actions.

The main feature of the process language is that processes can depend
on data. For instance, \texttt{b->p<>q} denotes a conditional choice
between processes \texttt{p} and \texttt{q}: if \texttt{b} evaluates to
\emph{true}, it behaves as process \texttt{p}, and otherwise as process
\texttt{q}. In a similar vein, \texttt{sum d:D.p(d)} describes a (possibly
infinite) choice between processes \texttt{p} with different
values for variable \texttt{d}.

\subsection{From SML to mCRL2}
\label{sec:translation}

We next present our formalisation of SML in mCRL2.  Every SML class is
converted to an mCRL2 process definition; the behaviour of an FSM is then
described by the behaviour of a process instance.  Each FSM maintains a
state and a pointer to the code it is currently executing. In addition,
an FSM is embedded in a global tree-like configuration that identifies its
parent, and its children. In order to faithfully describe the behaviour
of an FSM, we therefore equip each mCRL2 process definition for a class
\texttt{X} with this information as follows:

\begin{verbatim}
proc X_CLASS(self: Id, parent: Id, s: State, chs: Children,
             phase: Phase, aArgs: ActPhaseArgs)
\end{verbatim}

Parameter \texttt{self} represents a unique identifier for a process instance,
and \texttt{parent} is the identifier of \texttt{self}'s parent in the
tree. Parameter \texttt{s} is used to keep track of the state of the
FSM. The state information of \texttt{self}'s children is stored in 
\texttt{chs} of sort \texttt{Children}, which is a list
of sort \texttt{Child}, a structured sort:

\begin{verbatim}
Children = List(Child);
Child = struct child(id:Id, state:State, ptype:PType, busy:Bool);
\end{verbatim}

\noindent The above structured sort \texttt{Child} can be thought of
as a named tuple; \texttt{id} represents the unique identifier of a
child, \texttt{state} is the state that this child sent to \texttt{X}
in its last state-update message, \texttt{ptype} maintains the FSM class of
this child, and \texttt{busy} is the flag that indicates that the child
is still processing the last command \texttt{X} sent to it. 
This flag is set after sending a message to the child, and reset when it
responds with its new state. Whenever \texttt{X} receives a state-update
message from one of its children, the \texttt{chs} structure is updated
accordingly. This structure is used to evaluate the \whenclauses and to
determine to which processes commands have to be sent.

The \texttt{phase} parameter has value \texttt{WhenPhase} if the FSM
is executing the \whenclauses and \texttt{ActionPhase} otherwise;
\texttt{Phase} is a simple structured sort containing these two
values. The phases will be explained in detail in the following
section. Finally, \texttt{aArgs} is a structure that contains information
we only need in the \actionphase. It is defined as follows:

\begin{verbatim}
ActPhaseArgs = struct actArgs(cq: CommandQueue, nrf: IdList,
                              pc: Int, rsc: Bool)
\end{verbatim}

\noindent
We forego a discussion of the \texttt{nrf} and \texttt{rsc} parameters,
which are solely used during an intialisation phase. The command
queue \texttt{cq} contains messages
that are to be sent to an FSM's children. Specifically, when executing
a \texttt{do C P} statement, we add a pair with the child's id and the
command \texttt{C} to \texttt{cq}, for every child matching the child
pattern \texttt{P}. The command queue is subsequently emptied by 
sending the messages stored in \texttt{cq}.

\subsubsection{Phases}

\begin{figure}

  \centering
  \begin{tikzpicture}[->,>=stealth',node distance=60pt]
  \tikzstyle{every state}=[minimum width=60pt, inner sep=2pt, shape=circle]
  \node[state,draw=none] (wp) {\whenphase};
  \node[state,draw=none] (ap) [right of=wp,xshift=30pt] {\actionphase};
  \node[state,shape=rectangle] (waiting) [below of=ap, xshift=60pt,yshift=20pt]
     {\begin{tabular}{c}waiting for\\ command or\\state-update\end{tabular}};
  \node[state,shape=rectangle] (executing) [below of=waiting] 
     {\begin{tabular}{c}executing\\ statements\end{tabular}};
  \node[state,shape=rectangle] (emptying) [below of=executing]
     {\begin{tabular}{c}emptying\\ command\\ queue \end{tabular}};
  \node[state,shape=rectangle] (evaluating) [left of=executing, xshift=-175pt]
     {\begin{tabular}{c}evaluating\\ when clauses\end{tabular}};
  \node[draw=none] (b) [below of=wp, xshift=45pt,yshift=-130pt] {};

  \draw 
  (evaluating.45) |- node[xshift=40pt, above] {all guards are false} 
(waiting.200);
  \draw
  (waiting.165) -| node[xshift=70pt, above] {receive state-update} (evaluating.135);
  \draw
  (waiting) -- node[right] {received command} (executing);
  \draw
  (executing) -- node[right] {\begin{tabular}{l}executed\\ last statement \end{tabular}} (emptying);
  \draw
  (emptying) -| node[xshift=70pt, above] {command queue is empty} (evaluating);
  \draw[dotted,>=,thick] (b) -- ++(0,200pt);

  \end{tikzpicture}
  \caption{Overview of the \whenphase and the \actionphase.}
  \label{fig:phases_flow_diagram}
\end{figure}

During the \whenphase, a process executes \whenclauses until it reaches
a state in which none of the guards evaluate to true. It then moves to
the \actionphase. In the \actionphase, a process can receive a command
from its parent or a state-update message from one of its children. This
process is illustrated in Figure~\ref{fig:phases_flow_diagram}. After
handling the command or message, it returns to the \whenphase. 

Translating the \whenphase turns out to be rather straightforward: for
each state a process term consisting of nested if-then-else statements is
introduced, formalised by mCRL2 expressions of the form \texttt{b->p<>q}
(if \texttt{b}, then act as process \texttt{p}, otherwise as \texttt{q}).
Each if-clause represents exactly one \whenclause.  The else-clause
of the last \whenclause sends a state-update message (represented by
the mCRL2 action \texttt{send\_state}) with the current state to the
parent of this FSM and moves to the \actionphase. An example is given
in Translation~\ref{translation:whens}.

\begin{trans}
  \begin{tabular}{p{0.32\textwidth} | p{0.02\textwidth} p{0.65\textwidth}}
    \textbf{SML} & & \textbf{mCRL2}\\
\begin{verbatim}
state: OFF
  when G1 move_to S1
  ...
  when Gn move_to Sn
\end{verbatim}
  & &
\begin{verbatim}
instate_OFF(s) && isWhenPhase(phase) -> (
  translation_of_G1 ->
    move_state(self,S1).
    X_CLASS(self,parent,S1,chs,phase,aArgs) <>
  ...
  translation_of_Gn ->
     move_state(self,Sn).
     X_CLASS(self,parent,Sn,chs,phase,aArgs) <>
  send_state(self,parent,s).
  move_phase(self,ActionPhase).
  X_CLASS(self,parent,s,chs,ActionPhase,
          reset(aArgs)))
\end{verbatim}
  \end{tabular}
  \caption{Simplified translation of the \whenclauses of a state OFF.
Note that \texttt{p.q} describes the process \texttt{p} that, upon
successful termination, continues to behave as process \texttt{q}.}
  \label{translation:whens}
\end{trans}

The \texttt{move\_state} action indicates that the process changes
its state. The \texttt{send\_state} action communicates with the
\texttt{receive\_state} action to a \texttt{comm\_state} action,
representing the communication of the new state to the parent. Note
that the state is sent only if none of the guards are true. Upon sending
the new state, the process changes to the \actionphase, signalled by a
\texttt{move\_phase} action.

Modelling the \actionphase is more involved as we need to add some terms
for initialisation and sending messages. We will focus on the translation
of the \actionclauses and the code which handles state-update messages.

SML allows for an arbitrary number of statements and an arbitrary
number of (nested) if-statements in every \actionclause. 
We uniquely identify the translation of every statement with
an integer label. After executing a statement, the \texttt{pc(aArgs)}
program counter is set to the label of the statement which should be
executed next. There are two special cases here:

\begin{itemize}
\item Label 0, the clause selector. When entering the \actionphase,
  the program counter is set to 0. Upon receiving a command, the clause
  selector sets the program counter to the label of the first statement
  of the \actionclause that should handle the command.

\item Label -1, end of action. After executing an action, the program
  counter is set to -1, signalling that the command queue must be emptied
  and the process must change to the \whenphase.

\end{itemize}

\noindent An example is given in
Translation~\ref{translation:actions}. The \texttt{receive\_command}
action models the reception of a command that was sent by the FSM's
parent.  Such a command is ignored if no
\actionclause handles it. In the example, observe that both after ignoring
a command and after completing the execution of the \texttt{STANDBY}
action handler, the program counter is set to -1. A process term not
shown here then empties the command queue by issueing a sequence of
\texttt{send\_command} actions, and subsequently returns to
the \whenphase. Note that these \texttt{send\_command} actions and
\texttt{receive\_command} actions are meant to synchronise, resulting
in a \texttt{comm\_command} action. This is enforced at a higher level
in the specification.

\begin{trans}
  \begin{tabular}{p{0.32\textwidth} | p{0.02\textwidth} p{0.65\textwidth}}
    \textbf{SML} & & \textbf{mCRL2}\\
\begin{verbatim}
state: OFF
  action: STANDBY
    do STANDBY $ALL$Y
    do ON $ALL$Z
  action: OFF
    do OFF $ALL$Y
  action: ON
    do ON $ALL$Y
\end{verbatim}
  & &
\begin{verbatim}
instate_OFF(s) && isActPhase(phase) -> (
  pc(aArgs) == 0 ->
    sum c:Command.(
      receive_command(parent,self,c).
      isC_STANDBY(c) ->
        X_CLASS(self,parent,s,chs,phase,
                update_pc(aArgs,1)) <>
      isC_OFF(c) ->
        X_CLASS(self,parent,s,chs,phase,
                update_pc(aArgs,3)) <>
      isC_ON(c) ->
        X_CLASS(self,parent,s,chs,phase,
                update_pc(aArgs,4)) <>
      send_state(self,parent,s).
      ignored_command(self,c).
      X_CLASS(self,parent,s,chs,phase,
              update_pc(aArgs,-1))) +

  pc(aArgs) == 1 ->
    RPC_Chamber_CLASS(self,parent,s,chs,phase,
      add_HV_STANDBY_commands(
        update_pc(aArgs,2))) +

  pc(aArgs) == 2 ->
    RPC_Chamber_CLASS(self,parent,s,chs,phase,
      add_LV_ON_commands(
        update_pc(aArgs,-1)) + ...
\end{verbatim}
  \end{tabular}
  \caption{Simplified translation of the \actionclauses of a state OFF.}
  \label{translation:actions}
\end{trans}

Since a \texttt{do} statement is asynchronous, the children can send
their state-update at any time during the \actionphase. This is dealt
with as follows.  Suppose a state-update message is received. If this
precedes the reception of a command in this \actionphase, we simply
process the state-update and move to the \whenphase. If we are in the
middle of executing an \actionclause, we process the state-update,
but do not move to the \whenclause.

\subsection{Validating the Formalisation of SML}

The challenge in formalising SML is in correctly interpreting
its language constructs. We combined two strategies for assessing and
improving the correctness of our semantics: informal discussions with the
development team of the language and applying formal analysis techniques
on sample FSMs taken from the control software.

The discussions with the SML development team were used to solidify
our initial understanding of SML and its main constructs. Based on these
discussions, we manually translated several FSMs into mCRL2, and validated
the resulting processes manually using the available simulation and
visualisation tools of mCRL2. This revealed a few minor issues with
our understanding of the semantics of SML, alongside many issues that
could be traced back to sloppiness in applying the translation from SML
to mCRL2 manually.

In response to the latter problem, we eliminated the need for manually
translating FSMs to mCRL2. To this end, we utilised the ASF+SDF meta-environment (see~\cite{brand-meta,klint-meta}) to rapidly prototype an
automatic translator that, ultimately, came to implement the translation
scheme we described in the previous section.  The \emph{Syntax Definition
Formalism} (SDF) was used to describe the syntax of both SML and mCRL2,
whereas the \emph{Algebraic Specification Formalism} (ASF) was used
to express the term rewrite rules that are needed to do the actual
translation.  Apart from the gains in speed and the consistency in
applying the transformations that were brought about by the automation,
the automation also served the purpose of formalising the semantics
of SML.

The final details of our semantics were tested by analysing relatively
well-understood subsystems of the control software in mCRL2. We
briefly discuss our findings using a partly simplified subsystem,
colloquially known as the \emph{Wheel}, see Figure~\ref{fig:wheel_system}.
The Wheel subsystem is a component of the Resistive Plate Chamber (RPC)
subdetector of the CMS experiment. It belongs to the barrel region
of the RPC subdetector. Each Wheel subsystem contains 12 sectors,
each sector is equipped with 4 muon stations which are made of Drift
Tube chambers.  We forego a detailed formal discussion of this subsystem
(for details, we refer to~\cite{PP:08}), but only address our analysis
of this subsystem using formal analyses techniques, and the impact this
had on our understanding of the semantics and the transformation. It is
important to keep in mind that the analysis was conducted primarily to
assess the quality of our translation, the correctness of the subsystem
being only secondary.

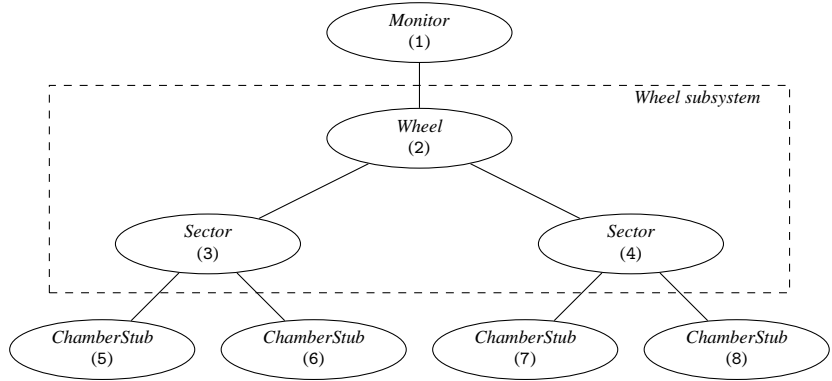
\begin{figure}
  \centering
  \begin{tikzpicture}[>=,node distance=40pt]
  \tikzstyle{every state}=[minimum width=70pt, inner sep=0pt]

  \node[state,ellipse] (top) {
  \scriptsize\begin{tabular}{c} \emph{Monitor}\\ (\texttt{1})\end{tabular}};
  \node[state,ellipse] (wheel) [below of=top] 
  {\scriptsize\begin{tabular}{c} \emph{Wheel}\\ (\texttt{2})\end{tabular}};
  \node[state,ellipse] (sector3) [below of=wheel, xshift=-80pt] 
  {\scriptsize\begin{tabular}{c} \emph{Sector}\\ (\texttt{3})\end{tabular}};
  \node[state,ellipse] (sector4) [below of=wheel,xshift=80pt] 
  {\scriptsize\begin{tabular}{c} \emph{Sector}\\ (\texttt{4})\end{tabular}};
  \node[state,ellipse] (chamber5) [below of=sector3,xshift=-40pt] 
  {\scriptsize\begin{tabular}{c} \emph{ChamberStub}\\ (\texttt{5})\end{tabular}};
  \node[state,ellipse] (chamber6) [below of=sector3,xshift=40pt] 
  {\scriptsize\begin{tabular}{c} \emph{ChamberStub}\\ (\texttt{6})\end{tabular}};
  \node[state,ellipse] (chamber7) [below of=sector4,xshift=-40pt] 
  {\scriptsize\begin{tabular}{c} \emph{ChamberStub}\\ (\texttt{7})\end{tabular}};
  \node[state,ellipse] (chamber8) [below of=sector4,xshift=40pt] 
  {\scriptsize\begin{tabular}{c} \emph{ChamberStub}\\ (\texttt{8})\end{tabular}};

  \path 
    (top) edge (wheel)
    (wheel) edge (sector3) edge (sector4)
    (sector3) edge (chamber5) edge (chamber6)
    (sector4) edge (chamber7) edge (chamber8);    
  \draw[dashed] (chamber5.north)[yshift=10pt,xshift=-20pt] rectangle (160pt,-30pt);
  \draw (wheel) node [xshift=105pt,yshift=15pt] {\scriptsize\emph{Wheel subsystem}};

  \end{tikzpicture}
  \caption{A schematic overview of our model of the \emph{Wheel}
    subsystem, and its used FSMs. The identifiers of the processes
    representing the FSMs are given between parentheses; these were
    used in our analyses.}

  \label{fig:wheel_system}
\end{figure}

The mCRL2 specification of the \emph{Wheel} subsystem was obtained
by combining the mCRL2 processes obtained by running our prototype
implementation on each involved FSM. Generating the state space of the
\emph{Wheel} subsystem takes roughly one minute using the symbolic state
space generation tools offered by the LTSmin tools~\cite{BlomPW10}. This
toolset can be integrated in the mCRL2 toolset. For the discussed
configuration, the state space is still of modest proportions, measuring
slightly less than 5 million states and 24 million transitions. Varying
the amount of children of class \emph{Sector} causes a dramatic
growth of the state space.  Using 3 instead of 2 children of class
\emph{Sector} yields roughly 800 million states; using 4 children of
class \emph{Sector}, leads to 120 billion states,  and requires
half a day.

Apart from repeating the simulations and visualisations, at this stage
we also applied \emph{model checking} to systematically probe the
translation. Together with the development team of the \emph{Wheel}
subsystem, a few basic requirements were formalised in the first-order
modal $\mu$-calculus~\cite{GW:05}, see Table~\ref{tab:requirements}.
The first-order modal $\mu$-calculus is the default requirement
specification language in the mCRL2 toolset.

\begin{table}[h]
\caption{Basic requirements for the \emph{Wheel} subsystem;
\texttt{i:Id} denotes an identifier of an FSM;
\texttt{i\_c:Id} denotes a child of FSM \texttt{i};
\texttt{c:Command} denotes a command;\texttt{c2s(c)} denotes the
state with the homonymous command name, \eg, \texttt{c2s(ON) = ON}.
}
\label{tab:requirements}

\centering
\begin{tabular}{c}
\hline\\
\parbox{.8\textwidth}
{
\begin{enumerate}
\item Absence of deadlock:

\begin{quote}
{\small \tt nu X. [true]X \&\& <true>true} 
\end{quote}

\item Absence of intermediate states in the \whenphase:

\begin{quote}
{\small \tt nu X. [true]X \&\& 

\qquad [exists s:State. move\_state(i,s)](nu Y. 

\qquad\qquad [(!move\_phase(i,ActionPhase))]Y

\qquad~\&\& [exists s:State. move\_state(i,s)]false)}
\end{quote}

\item Responsiveness:

\begin{quote}
{\small \tt nu X. [true]X \&\& 

\qquad [comm\_command(i,i\_c,c)](mu Y.

\qquad \qquad <true>true \&\& [!comm\_state(i\_c,i,c2s(c))]Y)}
\end{quote}

\item Progress:

\begin{quote}
{\small \tt nu X. [true]X \&\&

\qquad mu Y. <exists s:State. move\_state(i,s)>true || 

\qquad\qquad (<true>true \&\& [true]Y)
}
\end{quote}

\end{enumerate}
}\\
\hline
\end{tabular}

\end{table}

The studied subsystem was considered to satisfy all stated properties.
While smoothing out details in the translation of SML to mCRL2, the
deadlock-freedom property was violated every now and then, indicating
issues with our interpretation of SML. These were mostly concerned with
the semantics of the blocking and non-blocking constructs of SML, and
the complex constructs used to model the message passing between FSMs
and their children.

The absence of intermediate states in the \whenphase was violated only
once in our verification efforts. A more detailed scrutiny of the
run revealed a problem in our translation, which was subsequently fixed.

The third requirement, stating the inevitability of a state change by a
child once such a state change has been commissioned,
failed to hold.  The violation is caused by the overriding of commands by
subsequent commands that are issued immediately.  Discussions with the
development teams revealed that the violations are real, \ie, they are
within the range of real behaviour, suggesting that our formalisation
was adequate. The property was modified to ignore the spurious runs,
resulting in the following property:

\begin{quote}
{\small \tt nu X. [true]X \&\&

\qquad [comm\_command(i,i\_c,c)](mu Y. <true>true \&\& 

\qquad\qquad [!(comm\_state(i\_c,i,c2s(c)) ||

\qquad\qquad \ \ \ exists c':Command. comm\_command(i,i\_c,c'))]Y)
}
\end{quote}

The final requirement also failed to hold.  The violation is similar
spirited to the violation of the third requirement, and, again found to
comply to reality. The weakened requirement that was subsequently agreed
upon expresses the attainability of some state change:

\begin{quote}
{\small \tt nu X. [true]X \&\&

\qquad mu Y. <exists s:State. move\_state(i,s)>true || <true>Y
}
\end{quote}
Neither visual inspection of the state space using 2D and 3D visualisation
tools, nor simulation using the mCRL2 simulators revealed any further
incongruences in our final formalisation of SML, sketched in the previous
section.

\section{Dedicated Tooling for Verification}
\label{sec:dedicated_tooling}

Some desired properties, such as the absence of loops within the
\whenphase, can be checked by analysing an FSM in isolation, using the
transformation to mCRL2.  However, the verifications using the modal
$\mu$-calculus currently require too much overhead to serve as a basis
for lightweight tooling that can be integrated in the SML development
environment. 

In an attempt to improve on this situation, we explored the
possibilities of using \emph{Bounded Model Checking} (BMC)
\cite{biere1999symbolic,biere2003bounded}. The basic idea of BMC is to
check for a counterexample in bounded runs.  If no bugs are found using
the current bound, then the bound is increased until either a bug is
found, the problem becomes intractable, or some pre-determined upper
bound is reached upon which the verification is complete.  The BMC
problem can be efficiently reduced to a propositional satisfiability
problem, and can therefore be solved by SAT methods. SAT procedures do
not necessarily suffer from the space explosion problem, and a modern
SAT solver can handle formulas with hundreds of thousands of variables
or more, see \eg~\cite{biere2003bounded}.

We have applied BMC techniques for the detection of \texttt{move\_to}
loops and the detection of unreachable states and trap states. As
an example of a \texttt{move\_to} loop, consider the excerpt of the
\texttt{ECALfw\_CoolingDee} FSM class in Listing~\ref{code:endlessloop}, which
our tool found to contain issues. If an
instance of \texttt{ECALfw\_CoolingDee} has one child in state
\texttt{ERROR} and one in state \texttt{NO\_CONNECTION}, it will loop
indefinitely between these two states. Once this happens, an entire
subsystem may enter a livelock and become unresponsive.

\begin{codelisting}
\begin{verbatim}
state: ERROR
  when ( $ANY$FwCHILDREN in_state NO_CONNECTION ) move_to NO_CONNECTION
  when ( $ALL$FwCHILDREN in_state OK ) move_to OK

state: NO_CONNECTION
  when ( $ALL$FwCHILDREN in_state OK ) move_to OK
  when ( $ANY$FwCHILDREN in_state ERROR ) move_to ERROR
\end{verbatim}
  \caption{An excerpt from the \texttt{ECALfw\_CoolingDee} FSM that exhibits a
    loop within the \whenphase.}
  \label{code:endlessloop}
\end{codelisting}
We first convert this problem into a graph problem as follows. Let $\FF$
be an FSM and $\MM$ be a Kripke structure. A state in $\MM$ corresponds
to the combined state of $\FF$ and its children, \eg, if $\FF$ is in
state \texttt{ON} and has two children which are in state \texttt{OFF},
then the corresponding state in $\MM$ is $(\texttt{ON}, \texttt{OFF},
\texttt{OFF})$. There is a transition between two states $s_1$ and $s_2$
in $\MM$ if and only if $s_1$ can do a \texttt{move\_to} action to $s_2$
in $\FF$. Moreover, every state in $\MM$ is an initial state.
It thus suffices to inspect $\MM$ instead of $\FF$, as stated by
the following lemma:
\begin{lemma}
  \label{lem:loopifkripke}
  $\FF$ contains a loop of \texttt{move\_to} actions if and only if $\MM$ contains
  a loop.
\end{lemma}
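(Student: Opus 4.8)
The plan is to prove the biconditional by showing that loops in $\FF$ (sequences of \texttt{move\_to} actions returning to a starting state) correspond exactly to loops in the Kripke structure $\MM$. The key observation is that a \texttt{move\_to} action in $\FF$ only changes the state component of $\FF$ itself; the states reported by the children do not change during a sequence of \texttt{move\_to} actions within a single \whenphase (children only send updates, which happen in the \actionphase). Hence along any \texttt{move\_to} loop the child-state tuple is a fixed vector $\vec{c}$, and the \whenphase transition relation restricted to that fixed $\vec{c}$ is precisely what $\MM$ encodes, with the extra provision that $\MM$ makes every combined state initial so that any such fixed-$\vec{c}$ cycle is visible regardless of how the configuration was reached.

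Concretely, I would first make precise what ``$\FF$ contains a loop of \texttt{move\_to} actions'' means: there exist states $q_0, q_1, \dots, q_n = q_0$ with $n \geq 1$ of the FSM $\FF$, and a fixed child-configuration, such that for each $i$ the guards of the \whenclause of state $q_i$ make $\FF$ move to $q_{i+1}$ (the topmost enabled referer being \texttt{move\_to}~$q_{i+1}$). Then I would define the transition relation of $\MM$ exactly as in the paragraph preceding the lemma: $(s_1 \to s_2)$ in $\MM$ iff, writing $s_1 = (p, \vec{c})$, the FSM in state $p$ with children in configuration $\vec{c}$ performs a \texttt{move\_to} to $s_2 = (p', \vec{c})$ — note the child part is unchanged because \texttt{move\_to} does not touch it.

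For the forward direction, given a \texttt{move\_to} loop $q_0 \to \dots \to q_n = q_0$ in $\FF$ with fixed children $\vec{c}$, the states $(q_0,\vec{c}), \dots, (q_n,\vec{c})$ form a cycle in $\MM$ by construction; this is immediate. For the backward direction, suppose $\MM$ has a loop $s_0 \to \dots \to s_n = s_0$. Since each $\MM$-transition preserves the child component, all $s_i$ share the same $\vec{c}$, say $s_i = (q_i, \vec{c})$ with $q_n = q_0$. Every $\MM$-transition witnesses a \texttt{move\_to} in $\FF$ from $q_i$ to $q_{i+1}$ under children $\vec{c}$; concatenating these gives a \texttt{move\_to} loop in $\FF$. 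The reason we need ``every state of $\MM$ is initial'' is exactly to license starting $\FF$ in the combined configuration $(q_0, \vec{c})$ — since FSMs together with children can in principle reach such a configuration, or, for the purposes of the isolated analysis, we simply consider it as a candidate.

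The main obstacle, and the point that needs the most care, is justifying that the children's reported states genuinely stay constant throughout a \texttt{move\_to} loop — i.e., that no child state-update can be interleaved between two \texttt{move\_to} steps. This follows from the SML semantics described earlier: \whenclauses are evaluated one after another in the \whenphase, and state-updates from children are only received in the \actionphase (cf.\ Figure~\ref{fig:phases_flow_diagram}); a \texttt{move\_to} keeps the FSM in the \whenphase. A secondary subtlety is the ``topmost enabled referer'' rule: an $\FF$-transition to $q_{i+1}$ requires not just that $q_{i+1}$'s guard holds but that no earlier \whenclause with a \texttt{move\_to} (or \texttt{do}) referer is enabled; this must be baked into the definition of $\MM$'s transition relation so that the correspondence is genuinely an ``if and only if'' at the level of single steps, after which the lemma for loops follows by a routine induction on loop length.
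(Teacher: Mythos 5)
Your proof is correct, and it takes the only natural route given how $\MM$ is defined: the paper itself states Lemma~\ref{lem:loopifkripke} without any proof, treating it as immediate from the construction of $\MM$ (whose transition relation is \emph{defined} as the \texttt{move\_to} relation of \FF with the child component held fixed, and in which every state is initial). Your elaboration correctly supplies the two points that actually need justification and that the paper leaves implicit --- that children's reported states cannot change between \texttt{move\_to} steps because state-updates are only processed in the \actionphase, and that the ``topmost enabled referer'' rule must be baked into $\MM$'s transition relation for the step-level correspondence to be an equivalence --- so your write-up is, if anything, more complete than the paper's.
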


\newcommand{\instate}{\mathit{in\_state}}
\newcommand{\nextstate}{\mathit{next\_state}}

We next translate the problem of detecting a loop in $\MM$ into a SAT
problem. First, we consider executions of length $k$; afterwards, we show that
we can statically choose $k$ such that we can find every loop.

Let the predicate $\instate$ be defined as follows: $\instate(s, p, i)$
holds if and only if the process with identifier $p$ is in state $s$ after
$i$ steps. We assign the identifier zero to the FSM under consideration
and the numbers $1,2,3,\dots$ to its children. The resulting formula will
have three components: the \emph{state constraints}, the \emph{transition
relation} and the \emph{loop condition}.

Using the state constraints, we ensure the FSM to always be in exactly
one state. Moreover, the states of the children should not change during
the execution of the \whenphase, per the semantics in the previous
section. This is straigthforwardly expressed as a boolean formula on
the $\instate$ predicate.

Next, we encode the transition relation: the relation between $\instate(s,
0, i)$ and $\instate(s', 0, i+1)$ for every $i$. In other words: the
\texttt{move\_to} steps the parent process is allowed to take. This
involves converting the \whenclauses for each state of the parent
FSM, taking care the semantics as outlined in the previous section is
reflected. The last ingredient is the loop condition: if $\instate(s, 0,
0)$ holds, then $\instate(s, 0, i)$ must hold for some $i>1$, indicating
that the parent returned to the state in which it started.

The final SAT formula is obtained by taking the conjunction of the state
constraints, the transition relation and the loop condition.
It is not hard to see that if this formula is satisfiable, then there is a loop in
\MM and hence in \FF. 
It is more difficult to show that if there is a loop, then the formula is
satisfiable. Let $n$ be the total number of states of the FSM and let $n_t$ be
the total number of states of each child class $t$. We then have the following
result:
\begin{theorem}
  \label{thm:find_all_loops}
  All possible loops in \FF can be found by considering paths of length at most
  $n$ in an FSM configuration \FF having $n_t$ children for each child class
  $t$.
\end{theorem}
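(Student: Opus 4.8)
The plan is to establish the two quantitative bounds of the statement separately: the bound $n$ on the length of a loop, and the bound $n_t$ on the number of children of class $t$ one needs to include. Both rest on a single structural observation about the Kripke structure $\MM$: since a \texttt{move\_to} action alters only the state of the parent FSM and nothing else, every transition of $\MM$ leaves the children-components of its state tuples untouched. Hence any path of $\MM$, and in particular any loop, stays within the set of tuples that share one fixed children-component, and within that set a tuple is determined by the parent's state alone.

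For the length bound I would then argue as follows. Because a loop of $\MM$ visits only tuples with a common children-component, the distinct tuples it visits are in bijection with distinct parent states, of which there are only $n$. Therefore a \emph{simple} loop of $\MM$ (one that repeats no vertex) has length at most $n$, and every loop contains a simple loop; so $\MM$ has a loop if and only if it has one of length at most $n$. By Lemma~\ref{lem:loopifkripke} the same dichotomy transfers to \texttt{move\_to} loops of $\FF$, and it is this bounded length that the SAT encoding exploits when fixing the execution length $k$.

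For the bound on the number of children, the key lemma I would prove is that the truth value of every guard appearing in any \whenclause depends on the children only through, for each referenced child class $t$, the \emph{set} $O_t$ of states of class $t$ that are currently occupied (and, for guards mentioning the \texttt{FwCHILDREN} literal, through the union $\bigcup_t O_t$): a guard asserting that \emph{all} children of class $t$ are \texttt{in\_state}~$S$ holds exactly when $O_t \subseteq S$, and one asserting that \emph{some} child of class $t$ is \texttt{in\_state}~$S$ holds exactly when $O_t \cap S \neq \emptyset$; in neither case does the number of children realising a given occupied state matter. Granting this, take any configuration $\FF'$ exhibiting a \texttt{move\_to} loop from some parent state $s_0$, read off the occupied-state sets $O_t$ at the (frozen) children-component of that loop, and build a configuration $\FF$ containing, for each $t$, one child in each state of $O_t$ --- thus at most $n_t = |\mathrm{States}(t)|$ children of class $t$. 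Then $\FF$ and $\FF'$ induce identical occupied-state sets, hence assign the same truth value to every guard in every parent state, hence enable exactly the same \texttt{move\_to} transitions; so the loop of $\FF'$ is literally a loop of $\FF$, and since every state of $\MM$ is initial, $s_0$ is an admissible starting point. Padding $\FF$ with further children placed in already-occupied states, so as to reach exactly $n_t$ children per class as the statement phrases it, changes none of the sets $O_t$ and hence nothing about the transitions.

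The boolean-formula bookkeeping --- state constraints, transition relation, loop condition --- is routine and has already been sketched before the statement; the step that needs genuine care is the guard-factorisation lemma, and within it the corner cases of the \texttt{FwCHILDREN} literal (which ranges over children of all classes simultaneously and so must be checked to factor through $\bigcup_t O_t$) and of a class $t$ with no children, where $O_t = \emptyset$ makes the \emph{all} guards vacuously true and the \emph{some} guards false in both configurations. Once these are handled, the correspondence between the loops of $\FF'$ and of the reduced $\FF$ is immediate, and combined with the length bound yields the theorem.
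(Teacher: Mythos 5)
Your proposal is correct and follows essentially the same two-part decomposition as the paper's own (much terser) proof sketch: the length bound comes from the fact that $\FF$ has only $n$ states and every state of $\MM$ is initial, and the children bound comes from the observation that SML guards only test \emph{any}/\emph{all} membership and hence depend on the children only through the set of occupied states per class. Your guard-factorisation lemma and the handling of the \texttt{FwCHILDREN} and empty-class corner cases make explicit what the paper leaves implicit, but the underlying argument is the same.
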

\begin{proof}[sketch]
  Since \FF only has $n$ states, the longest possible loop also contains $n$
  states. Since every state in $\MM$ is an initial state, every possible loop can by
  found by doing $n$ steps from an initial state.

  It remains to show that all loops can be found by considering
  a configuration with $n_t$ children for each child class $t$. This
  follows from the fact that SML guards are restricted to check for
  \emph{any} or \emph{all} children in a particular state.\qed

\end{proof}
A second desirable behavioural property of an FSM is that all states
should remain reachable during the execution of an FSM.  While we
can again easily encode this property into the modal $\mu$-calculus,
we use a more direct approach to detect violations of this property
by constructing a graph that captures all potential state changes.
For this, we determine whether there is a configuration of children
such that \FF can execute a \texttt{move\_to} action from a state $s$
to a state $s'$. Doing so for all pairs $(s,s')$ of states of \FF yields
a graph encoding all possible state changes of \FF.

Computing the strongly connected components (SCCs) of the thusly
obtained graph gives sufficient information to pinpoint violations to
the reachability property: the presence of more than a single SCC means
that one cannot move back and forth these SCCs (by definition of an SCC),
and, therefore, their states.  Note that this is an under-approximation
of all errors that can potentially exist, as the actual reachability
dynamically depends on the configuration of the children of an
FSM. Still, as the state change graph of the \texttt{ESfw\_Endcap}
FSM class in Figure~\ref{fig:move_to_graph} illustrates, issues can be
found in production FSMs: the \texttt{OFF} state can never be reached
from any of the other states. Using the graphs generated by our tools,
such issues are quickly explained and located.
\begin{figure}[h!]
  \centering

\begin{tikzpicture}[>=stealth', node distance=45pt]

\node (on) [draw, shape=ellipse] {\tiny\texttt{ON}};
\node (hvramping) [draw, shape=ellipse, below of=on] {\tiny\texttt{HV\_RAMPING}};
\node (partlyon) [draw, shape=ellipse, below left of=hvramping,xshift=-70pt] {\tiny\texttt{PARTLY\_ON}};
\node (lvonhvoff) [draw, shape=ellipse, below right of=hvramping,xshift=70pt] {\tiny\texttt{LV\_ON\_HV\_OFF}};
\node (offlocked) at (partlyon-|hvramping) [draw, shape=ellipse] {\tiny\texttt{OFF\_LOCKED}};
\node (error) at (hvramping-|partlyon) [draw, shape=ellipse] {\tiny\texttt{ERROR}};
\node (off) at (on-|lvonhvoff) [draw,shape=ellipse,yshift=40pt] {\tiny\texttt{OFF}};

\path[<->] 
 (hvramping) edge (on)
 (hvramping) edge (partlyon)
 (hvramping) edge (lvonhvoff)
 (partlyon) edge (on)
 (partlyon) edge (error)
 (partlyon) edge (offlocked)
 (partlyon) edge[bend right] (lvonhvoff)
 (on) edge (lvonhvoff)
;
\path[->,dotted] 
 (hvramping) edge (error)
 (on) edge[bend left] (offlocked)
 (lvonhvoff) edge (offlocked)
 (hvramping) edge[bend right] (offlocked)
 (on) edge (error)
 (lvonhvoff) edge (error);

\path[->,dotted] 
 (off) edge (on)
 (off) edge[bend right] (error)
 (off) edge (partlyon)
 (off) edge (hvramping)
 (off) edge (lvonhvoff)
;

\node[shape=rectangle,draw=black,inner xsep=15pt, inner ysep=10pt,dashed] (offbox) at (off) {};

\node[shape=rectangle,draw=black,inner xsep=130pt, inner ysep=67pt,dashed] (hvrampingbox) at (hvramping) {};

\end{tikzpicture}

  \caption{The state change graph for the \texttt{ESfw\_Endcap} FSM class. 
    The solid lines are bidirectional; the dotted lines are unidirectional 
    state changes. The SCCs are indicated by the dashed frames.}

  \label{fig:move_to_graph}
\end{figure}
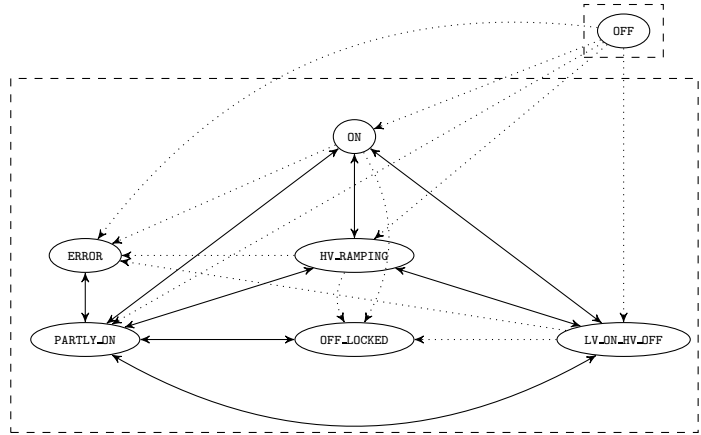

\paragraph{Results}

The results using our dedicated tools for performing these behavioural
sanity checks on isolated FSMs are very satisfactory: of the several
hundreds of FSM classes contained in the control system, we so far
analysed 40 FSM classes and found 6 to contain issues.  In 4 of these,
we found logical errors that could give rise to livelocks in the system
due to the presence of loops in the \whenphase; an example thereof is
given in Listing~\ref{code:endlessloop}. Somewhat unexpectedly, all loops
were found to involve two states. Note that the size of the average
FSM class (in general more than 100 lines of SML code, and at least
two children) means that even short loops such as the ones identified
so far remain unnoticed and are hard to pinpoint. The remaining two
FSM classes were found to violate the required reachability of states,
see \eg~Figure~\ref{fig:move_to_graph}. The speed at which the errors can
be found (generally requiring less than a second) means that 
the sanity checks could easily be incorporated in the design cycle of
the FSMs.

\section{Conclusion}
\label{sec:conclusions}

We discussed and studied the State Machine Language (SML) that is
currently used for programming the control software of the CMS experiment
running at the Large Hadron Collider. To fully understand the language,
we formalised it using the process algebraic language mCRL2. The quality
of our formalisation was assessed using a combination of simulation and
visualisation of the behaviour of FSMs in isolation and formally verifying
small subsystems using model checking. To facilitate, among others,
the assessment, the translation of SML to mCRL2 was implemented using
the ASF+SDF meta-environment.  Based on our understanding
of the semantics of SML, we have built dedicated tools for performing
sanity checks on isolated FSMs. Using these tools we found several
issues in the control system.  These tools have been well-received by
the engineers at CERN, and are considered for inclusion in the
development environment.

Our formalisation of SML opens up the possibility of verifying
realistically large subsystems of the control system; clearly, it will be
one of the most challenging verification problems currently available.
In our analysis of the \emph{Wheel} subsystem, we have only used a
modest set of tools for manipulating the state space; symmetry reduction,
partial order reduction, parallel exploration techniques, abstractions
and abstract interpretation were not considered at this point. It remains
to be investigated how such techniques fare on this problem.

\paragraph{Acknowledgments.}
We thank Giel Oerlemans, Dennis Schunselaar and Frank Staals from the
Eindhoven University of Technology for their contribution to a
first draft of the ASF+SDF translation.
We also
thank Frank Glege and Robert Gomez-Reino Garrido from the CERN CMS DAQ
group for their support and advice, and Clara Gaspar for discussions on
SML. Jaco van de Pol is thanked for his help with the LTSmin toolset.

\bibliographystyle{plain}

\iffsen

\else

\fi

\iffsen
\end{document}

\else

\appendix
\cleardoublepage

\section{ASF and SDF files}
\subsection{midtools.sdf}
\tiny
\begin{listing}[5]{1}

module midtools
imports
  basic/Whitespace
  basic/Comments
  basic/Booleans
  basic/Integers

exports
sorts
  MId
  MIds

lexical restrictions
  MId -/- [a-zA-Z0-9\_\']

lexical syntax
  [a-zA-Z\_] ([a-zA-Z0-9\_\'])* -> MId

context-free syntax
  {MId ","}+ -> MIds

  concat(MId, MId)                                              -> MId

  contains(MId, MId*)                                           -> Boolean

  removeDuplicates(MId*)                                        -> MId*

  remove(MId, MId*)                                             -> MId*

  intersect(MId*, MId*)                                         -> MId*

  empty(MId*)                                                   -> Boolean

  length(MId*)                                                  -> Integer

hiddens
variables
  "$mid"[0-9]*                      -> MId
  "$mid+"[0-9]*                     -> MId+
  "$mid*"[0-9]*                     -> MId*
  "$i"                              -> Integer

lexical variables
  "#midHead"[0-9]*                  -> [a-zA-Z\_]
  "#midTailChar"[0-9]*              -> ([a-zA-Z0-9\_\'])
  "#midTail"[0-9]*                  -> ([a-zA-Z0-9\_\'])*
\end{listing}

\subsection{midtools.asf}
\tiny
\begin{listing}[5]{1}
equations

[concat-1]
concat(mid(#midHead1 #midTail1), mid(#midHead2 #midTail2)) = mid(#midHead1 #midTail1 #midHead2 #midTail2)

[contains-empty]
contains($mid, ) = false

[contains-match]
contains($mid, $mid $mid*) = true

[contains-nomatch]
$mid1 != $mid2
===>
contains($mid1, $mid2 $mid*) = contains($mid1, $mid*)

[removeDuplicates-empty]
removeDuplicates() =

[removeDuplicates-many-nonunique]
contains($mid, $mid*) == true
===>
removeDuplicates($mid $mid*) = removeDuplicates($mid*)

[removeDuplicates-many-unique]
contains($mid, $mid*) == false
===>
removeDuplicates($mid $mid*) = $mid removeDuplicates($mid*)

[remove-empty]
remove($mid, ) =

[remove-many-nomatch]
$mid1 != $mid2
===>
remove($mid1, $mid2 $mid*) = $mid2 remove($mid1, $mid*)

[remove-many-match]
remove($mid,  $mid $mid*) = remove($mid, $mid*)

[intersect-empty]
intersect(, $mid*2) =

[intersect-many-match]
contains($mid, $mid*2) == true
===>
intersect($mid $mid*1, $mid*2) = $mid intersect($mid*1, $mid*2)

[intersect-many-nomatch]
contains($mid, $mid*2) == false
===>
intersect($mid $mid*1, $mid*2) = intersect($mid*1, $mid*2)

[empty-true]
empty() = true

[empty-false]
empty($mid+) = false

[length-empty]
length() = 0

[length-many]
$i := length($mid*)
===>
length($mid $mid*) = $i + 1
\end{listing}

\subsection{cfsm.sdf}
\tiny
\begin{listing}[5]{1}

module cfsm
imports
  basic/Comments
  basic/Whitespace
  midtools

exports
context-free start-symbols
  FSMSpecification

sorts
  Identifier
  FSMSpecification
  FSMClass
  FSMStateClause
  FSMWhenClause
  FSMReferer
  FSMActionClause
  FSMStatement
  FSMParameter

  FSMExpression

  FSMChildrenSpec
  FSMChildrenAny
  FSMChildrenAll
  FSMChildrenAnySpecific
  FSMChildrenAnyFwChildren
  FSMChildrenAllSpecific
  FSMChildrenAllFwChildren

  FSMClassName
  FSMStateName
  FSMStateNameSpec
  FSMActionName

lexical syntax
  "!" "=" ~[\n]* [\n] -> LAYOUT
  "/associated" ~[\n]* [\n] -> LAYOUT
  "/ASSOCIATED" ~[\n]* [\n] -> LAYOUT
  [A-zA-Z0-9]* -> Identifier

context-free syntax
  FSMClass+                                                  -> FSMSpecification

  "class: $FWPART_$TOP$" FSMClassName FSMStateClause+        -> FSMClass
  "state:" FSMStateName FSMWhenClause* FSMActionClause*      -> FSMStateClause
  "when" "(" FSMExpression ")" FSMReferer+                   -> FSMWhenClause
  "action:" FSMActionName FSMStatement*                      -> FSMActionClause

   Identifier                                           -> FSMParameter
  "move_to" FSMStateName                                -> FSMStatement
  "do" FSMActionName FSMChildrenSpec                    -> FSMStatement
  "do" FSMActionName "(" FSMParameter "=" FSMParameter ")" FSMChildrenSpec -> FSMStatement
  "if" "(" FSMExpression ")" "then" FSMStatement+ ("else" FSMStatement+)? "endif" -> FSMStatement

  "move_to" FSMStateName                                -> FSMReferer
  "do" FSMActionName                                    -> FSMReferer

  FSMChildrenSpec "in_state"     FSMStateNameSpec         -> FSMExpression
  FSMChildrenSpec "not_in_state" FSMStateNameSpec         -> FSMExpression
  "not" "(" FSMExpression ")"                             -> FSMExpression
  "not" "(" FSMExpression ")" "and" "(" FSMExpression ")" -> FSMExpression

  "(" FSMExpression ")"                                 -> FSMExpression
  FSMExpression "and" FSMExpression                     -> FSMExpression {left}
  FSMExpression "or"  FSMExpression                     -> FSMExpression {left}

  FSMStateName                                          -> FSMStateNameSpec
  "{" {FSMStateName ","}* "}"                           -> FSMStateNameSpec

  "(" FSMChildrenSpec ")"  -> FSMChildrenSpec
  FSMChildrenAny           -> FSMChildrenSpec
  FSMChildrenAll           -> FSMChildrenSpec

  FSMChildrenAnySpecific   -> FSMChildrenAny
  FSMChildrenAnyFwChildren -> FSMChildrenAny
  FSMChildrenAllSpecific   -> FSMChildrenAll
  FSMChildrenAllFwChildren -> FSMChildrenAll

  "$ANY$" FSMClassName     -> FSMChildrenAnySpecific
  "$ANY$FwCHILDREN"        -> FSMChildrenAnyFwChildren
  "$ANY$FwCHILDREN"        -> FSMChildrenAnySpecific {reject}

  "$ALL$" FSMClassName     -> FSMChildrenAllSpecific
  "$ALL$FwCHILDREN"        -> FSMChildrenAllFwChildren
  "$ALL$FwCHILDREN"        -> FSMChildrenAllSpecific {reject}

  MId -> FSMClassName
  MId -> FSMStateName
  MId -> FSMActionName
\end{listing}

\subsection{mcrlt.sdf}
\tiny
\begin{listing}[5]{1}

module mcrlt
imports
  basic/Whitespace
  basic/Comments
  basic/Integers
  midtools

exports
sorts

  SortExpr
  Domain
  SortSpec
  SortDecl
  ConstrDecl
  ProjDecl
  ProjDecls

  IdDecl
  IdsDecl
  OpSpec
  OpDecl

  EqnSpec
  EqnDecl

  DataExpr
  DataExprs
  BagEnumElt
  BagEnumElts

  MAId
  MAIdSet
  CommExpr
  CommExprSet
  RenExpr
  RenExprSet

  ProcExpr

  ActDecl
  ActSpec

  ProcDecl
  ProcSpec
  Init

  MCRL2Spec

lexical restrictions
  MId -/- [a-zA-Z0-9\_\']

context-free start-symbols
  MCRL2Spec

lexical syntax
  [\r\t\n\ ]       -> LAYOUT
  "


context-free syntax
  "sort"    -> MId {reject}
  "cons"    -> MId {reject}
  "map"     -> MId {reject}
  "var"     -> MId {reject}
  "eqn"     -> MId {reject}
  "act"     -> MId {reject}
  "proc"    -> MId {reject}
  "init"    -> MId {reject}
  "delta"   -> MId {reject}
  "tau"     -> MId {reject}
  "sum"     -> MId {reject}
  "block"   -> MId {reject}
  "allow"   -> MId {reject}
  "hide"    -> MId {reject}
  "rename"  -> MId {reject}
  "comm"    -> MId {reject}
  "struct"  -> MId {reject}
  "Bool"    -> MId {reject}
  "Pos"     -> MId {reject}
  "Nat"     -> MId {reject}
  "Int"     -> MId {reject}
  "Real"    -> MId {reject}
  "List"    -> MId {reject}
  "Set"     -> MId {reject}
  "Bag"     -> MId {reject}
  "true"    -> MId {reject}
  "false"   -> MId {reject}
  "whr"     -> MId {reject}
  "end"     -> MId {reject}
  "lambda"  -> MId {reject}
  "forall"  -> MId {reject}
  "exists"  -> MId {reject}
  "div"     -> MId {reject}
  "mod"     -> MId {reject}
  "in"      -> MId {reject}

  "Bool"                                 -> SortExpr
  "Pos" | "Nat" | "Int" | "Real"         -> SortExpr
  "List" "(" SortExpr ")"                -> SortExpr
  "Set" "(" SortExpr ")"                 -> SortExpr
  "Bag" "(" SortExpr ")"                 -> SortExpr
  MId                                    -> SortExpr
  "(" SortExpr ")"                       -> SortExpr
  Domain "->" SortExpr                   -> SortExpr

  {SortExpr "#"}+                         -> Domain
  "sort" SortDecl+                        -> SortSpec
  MIds ";"                                -> SortDecl
  MIds "=" SortExpr ";"                   -> SortDecl
  MIds "=" "struct" {ConstrDecl "|"}+ ";" -> SortDecl

  MId ("(" ProjDecls ")")? ("?" MId)       -> ConstrDecl
  (MId ":")? Domain                        -> ProjDecl
  {ProjDecl ","}+                          -> ProjDecls

  MId ":" SortExpr          -> IdDecl
  MIds ":" SortExpr ";"?    -> IdsDecl
  ("cons" | "map") OpDecl+  -> OpSpec
  IdsDecl ";"               -> OpDecl

  "eqn" EqnDecl+                          -> EqnSpec
  "var" IdsDecl+ "eqn" EqnDecl+           -> EqnSpec
  DataExpr "=" DataExpr ";"               -> EqnDecl
  DataExpr "->" DataExpr "=" DataExpr ";" -> EqnDecl

  MId | Integer | "true" | "false" | "[]" | "{}" -> DataExpr
  "[" DataExprs "]"                            -> DataExpr
  "{" DataExprs "}"                            -> DataExpr
  "{" BagEnumElts "}"                          -> DataExpr
  "{" IdDecl "|" DataExpr "}"                  -> DataExpr
  "(" DataExpr ")"                             -> DataExpr
  ("!" | "#") DataExpr                         -> DataExpr
  ("forall" | "exists") IdDecl "." DataExpr    -> DataExpr
  "lambda" IdDecl "." DataExpr                 -> DataExpr
  DataExpr "whr" DataExprs "end"               -> DataExpr

  {DataExpr ","}+                              -> DataExprs
  DataExpr ":" DataExpr                        -> BagEnumElt
  {BagEnumElt ","}*                            -> BagEnumElts

  {MId "|"}+               -> MAId
  "{" {MAId ","}* "}"      -> MAIdSet
  MAId ("->" MId)?         -> CommExpr
  "{" {CommExpr ","}* "}"  -> CommExprSet
  MId "->" MId             -> RenExpr
  "{" {RenExpr ","}* "}"   -> RenExprSet

  MId                                                        -> ProcExpr
  MId "(" DataExprs ")"                                      -> ProcExpr
  "delta"                                                    -> ProcExpr
  "tau"                                                      -> ProcExpr
  ("block" | "allow" | "hide") "(" MAIdSet "," ProcExpr ")"  -> ProcExpr
  "rename" "(" RenExprSet "," ProcExpr ")"                   -> ProcExpr
  "comm" "(" CommExprSet "," ProcExpr ")"                    -> ProcExpr
  "(" ProcExpr ")"                                           -> ProcExpr

  MIds (":" Domain)? ";"  -> ActDecl
  "act" ActDecl+         -> ActSpec

  MId "=" ProcExpr ";"                         -> ProcDecl
  MId "(" {IdsDecl ","}+ ")" "=" ProcExpr ";"  -> ProcDecl
  "proc" ProcDecl+                             -> ProcSpec
  "init" ProcExpr ";"                          -> Init

  SortSpec* OpSpec* EqnSpec* ActSpec* ProcSpec* Init* -> MCRL2Spec

context-free priorities
  ProcExpr "@" ProcExpr                 -> ProcExpr         >
  "sum" {IdDecl ","}+ "." ProcExpr      -> ProcExpr         >
  ProcExpr "." ProcExpr                 -> ProcExpr {right} >
  ProcExpr "<<" ProcExpr                -> ProcExpr {left}  >
  {
    ProcExpr "||"  ProcExpr             -> ProcExpr {right}
    ProcExpr "|"   ProcExpr             -> ProcExpr {right}
    ProcExpr "||_" ProcExpr             -> ProcExpr {right}
  }                                                         >
  DataExpr "->" ProcExpr "<>" ProcExpr  -> ProcExpr         >
  DataExpr "->" ProcExpr                -> ProcExpr         >
  ProcExpr "+" ProcExpr                 -> ProcExpr {right} >

  DataExpr "(" DataExprs ")"                                 -> DataExpr        >
  DataExpr ("|>" | "<|") DataExpr                            -> DataExpr {left} >
  DataExpr "++" DataExpr                                     -> DataExpr {left} >
  DataExpr ("." | "*" | "div") DataExpr                      -> DataExpr {left} >
  DataExpr ("+" | "-") DataExpr                              -> DataExpr {left} >
  DataExpr ("mod"  | "in") DataExpr                          -> DataExpr {left} >

  DataExpr ("==" | "!=" | "<" | ">" | "<=" | ">=") DataExpr  -> DataExpr {left} >
  DataExpr ("&&" | "||" | "=>") DataExpr                     -> DataExpr {left}
\end{listing}

\subsection{cfsm2mcrl2.sdf}
\tiny
\begin{listing}[5]{1}

module cfsm2mcrl2

imports cfsm
imports mcrlt
imports genericclauses
imports midtools
imports basic/Integers
imports basic/BoolCon

exports

context-free start-symbols
SortDecl+ ProcExpr

context-free syntax

  cfsm2mcrl2(FSMClass+) -> ProcSpec+

  cfsm2mcrl2bm(FSMClass+) -> ProcSpec+

  fsmGenerateSorts(FSMClass+) -> SortDecl+

  mcrl2GetPTypes(ProcSpec+) -> SortDecl
hiddens

sorts
  ProcName ActionClauseTuple UniqueProcName PC

context-free syntax

  MId                                                           -> ProcName
  Integer                                                       -> PC
  <FSMActionName, PC, DataExpr, ProcExpr>                       -> ActionClauseTuple

  fsmClasses2Mcrl2Procs(FSMClass+, BoolCon) -> ProcSpec+

  fsmClass2Mcrl2Proc(FSMClass, BoolCon)     -> ProcSpec
  fsmClassName2ProcName(MId, BoolCon)       -> MId

  convertStates(FSMStateClause*, ProcName, BoolCon, MId*)       -> ProcExpr
  convertState(FSMStateClause, ProcName, BoolCon, MId*)         -> ProcExpr


  convertWhenClauses(FSMWhenClause*, ProcName, MId, ActionClauseTuple*) -> ProcExpr
  convertReferer(FSMReferer, ProcName, MId, ActionClauseTuple*)         -> ProcExpr
  convertExpr(FSMExpression)                                            -> DataExpr
  convertChildrenSpec(FSMChildrenSpec)                                  -> DataExpr
  convertStateNameSpec(FSMStateNameSpec)                                -> DataExpr

  combineActionClauseComponents(ActionClauseTuple*, ProcName, FSMStateName) -> ProcExpr

  gatherComponentsFromActionClauses(FSMActionClause*, ProcName, MId, PC) -> <ActionClauseTuple*, PC>

  getActionClauseTupleForActionName(ActionClauseTuple*, FSMActionName) -> ActionClauseTuple

  constructClauseSelectors(ActionClauseTuple*, ProcName) -> ProcExpr

  convertStatements(FSMStatement*, ProcName, PC, PC, PC)  -> <ProcExpr,PC>
  convertStatement(FSMStatement, ProcName, PC, PC, PC)    -> <ProcExpr,PC>

  insertIfBlockingWaiter(ProcName, PC)                    -> ProcExpr

  inAnyState(MId*)                                                    -> DataExpr
  createObedientCommandAcceptor(FSMActionClause*, ProcName, MId*)     -> ProcExpr

  isStateCheck(MId)      -> DataExpr
  isStateCheck(MId, MId) -> DataExpr
  isCommandCheck(MId)    -> DataExpr
  "isStateCheck"         -> DataExpr {reject}
  "isCommandCheck"       -> DataExpr {reject}

  toMcrlIsFunction(MId) -> MId

  toMcrlStateName(MId) -> MId
  "toMcrlStateName"    -> DataExpr {reject}
  "toMcrlStateName"    -> MId {reject}

  toMcrlCmdName(MId)   -> MId



  convertClassNamesToTypeConstrDecl(MId*)   -> {ConstrDecl "|"}+
  convertStateNamesToStateConstrDecl(MId*)  -> {ConstrDecl "|"}+
  convertActionNamesToCmdConstrDecl(MId*)   -> {ConstrDecl "|"}+

  collectClasses(FSMSpecification,MId*)       -> MId* {traversal(accu,top-down,continue)}
  collectClasses(FSMClass,MId*)               -> MId* {traversal(accu,top-down,continue)}
  collectClasses(FSMChildrenAnySpecific,MId*) -> MId* {traversal(accu,top-down,continue)}
  collectClasses(FSMChildrenAllSpecific,MId*) -> MId* {traversal(accu,top-down,continue)}

  collectStates(FSMSpecification, MId*)     -> MId* {traversal(accu,top-down,continue)}
  collectStates(FSMStateClause+, MId*)      -> MId* {traversal(accu,top-down,continue)}
  collectStates(FSMStateClause, MId*)       -> MId* {traversal(accu,top-down,continue)}
  collectStates(FSMWhenClause*, MId*)       -> MId* {traversal(accu,top-down,continue)}
  collectStates(FSMWhenClause, MId*)        -> MId* {traversal(accu,top-down,continue)}
  collectStates(FSMStateNameSpec, MId*)     -> MId* {traversal(accu,top-down,continue)}

  collectCommands(FSMSpecification,MId*)    -> MId* {traversal(accu,top-down,continue)}
  collectCommands(FSMActionClause,MId*)     -> MId* {traversal(accu,top-down,continue)}

  collectActionNames(FSMSpecification, MId*) -> MId* {traversal(accu,top-down,continue)}
  collectActionNames(FSMActionClause*, MId*) -> MId* {traversal(accu,top-down,continue)}

  addToSet(MId,MId*)                        -> MId*

  mcrl2PTypesFromProcSpecs(ProcSpec+) -> {ConstrDecl "|"}+
  mcrl2PTypesFromProcDecls(ProcDecl+) -> {ConstrDecl "|"}+

variables
  "$mid"[0-9]*                      -> MId
  "$mid+"[0-9]*                     -> MId+
  "$mid*"[0-9]*                     -> MId*
  "$mids"[0-9]*                     -> MIds

  "$fsmSpec"[0-9]*                  -> FSMSpecification
  "$fsmClass"[0-9]*                 -> FSMClass
  "$fsmClass+"[0-9]*                -> FSMClass+
  "$fsmState"[0-9]*                 -> FSMStateClause
  "$fsmState+"[0-9]*                -> FSMStateClause+
  "$fsmWhenClause"[0-9]*            -> FSMWhenClause
  "$fsmWhenClause+"[0-9]*           -> FSMWhenClause+
  "$fsmWhenClause*"[0-9]*           -> FSMWhenClause*
  "$fsmActionClause"[0-9]*          -> FSMActionClause
  "$fsmActionClause*"[0-9]*         -> FSMActionClause*
  "$fsmExpr"[0-9]*                  -> FSMExpression
  "$fsmExpr*"[0-9]*                 -> FSMExpression*
  "$fsmStatement"[0-9]*             -> FSMStatement
  "$fsmStatement+"[0-9]*            -> FSMStatement+
  "$fsmStatement*"[0-9]*            -> FSMStatement*
  "$fsmReferer"[0-9]*               -> FSMReferer

  "$fsmClassName"[0-9]*             -> MId

  "$fsmChildrenSpec"[0-9]*          -> FSMChildrenSpec
  "$fsmStateNameSpec"[0-9]*         -> FSMStateNameSpec
  "$fsmStateNameSpecs*"[0-9]*       -> {FSMStateName ","}*
  "$fsmChildrenAnySpecific"[0-9]*   -> FSMChildrenAnySpecific
  "$fsmChildrenAllSpecific"[0-9]*   -> FSMChildrenAllSpecific
  "$fsmChildrenAnyFwChildren"[0-9]* -> FSMChildrenAnyFwChildren
  "$fsmChildrenAllFwChildren"[0-9]* -> FSMChildrenAllFwChildren
  "$fsmActionName"[0-9]*            -> MId 

  "$fsmStateName"[0-9]*             -> MId
  "$fsmCurrentState"[0-9]*          -> MId
  "$fsmNewState"[0-9]*              -> MId
  "$fsmStateNames"[0-9]*            -> {FSMStateName ","}+
  "$mcrl2CurrentState"[0-9]*        -> MId
  "$mcrl2NewState"[0-9]*            -> MId

  "$actionClauseTuple"[0-9]*        -> ActionClauseTuple
  "$actionClauseTuple+"[0-9]*       -> ActionClauseTuple+
  "$actionClauseTuple*"[0-9]*       -> ActionClauseTuple*

  "$procDecl"[0-9]*                 -> ProcDecl
  "$procDecl+"[0-9]*                -> ProcDecl+
  "$procDecl*"[0-9]*                -> ProcDecl*
  "$dataExpr"[0-9]*                 -> DataExpr
  "$dataExprs"[0-9]*                -> {DataExpr ","}+
  "$procExpr"[0-9]*                 -> ProcExpr
  "$procSpec"[0-9]*                 -> ProcSpec
  "$procSpec+"[0-9]*                -> ProcSpec+
  "$procSpec*"[0-9]*                -> ProcSpec*
  "$mcrl2Command"                   -> MId
  "$mcrlActionCondition"            -> DataExpr
  "$mcrlActionEffect"               -> ProcExpr

  "$procName"[0-9]*                 -> MId

  "$pc"[0-9]*                       -> Integer
  "$start_pc"[0-9]*                 -> Integer
  "$avail_pc"[0-9]*                 -> Integer
  "$jump_pc"[0-9]*                  -> Integer
  "$then_pc"[0-9]*                  -> Integer
  "$else_pc"[0-9]*                  -> Integer
  "$end_pc"[0-9]*                   -> Integer

  "$idsDecls"                       -> {IdsDecl ","}+

  "$b"                              -> BoolCon

lexical variables
  "#midHead"[0-9]*                  -> [a-zA-Z\_]
  "#midTailChar"[0-9]*              -> ([a-zA-Z0-9\_\'])
  "#midTail"[0-9]*                  -> ([a-zA-Z0-9\_\'])*
\end{listing}

\subsection{cfsm2mcrl2.asf}
\tiny
\begin{listing}[5]{1}
equations

[cfsm2mcrl2-1]
cfsm2mcrl2($fsmClass+) = fsmClasses2Mcrl2Procs($fsmClass+, false)

[cfsm2mcrl2bm-1]
cfsm2mcrl2bm($fsmClass+) = fsmClasses2Mcrl2Procs($fsmClass+, true)

[convertSpec-single]
fsmClasses2Mcrl2Procs($fsmClass, $b) = fsmClass2Mcrl2Proc($fsmClass, $b)

[convertSpec-multi]
fsmClasses2Mcrl2Procs($fsmClass $fsmClass+, $b) =
    fsmClass2Mcrl2Proc($fsmClass, $b) fsmClasses2Mcrl2Procs($fsmClass+, $b)


[fsmClassName2ProcName-bm]
fsmClassName2ProcName($mid, true) = $mid 

[fsmClassName2ProcName-nobm]
fsmClassName2ProcName($mid, false) = $mid

[fsmClass2Mcrl2Proc-1]
$procName := fsmClassName2ProcName($fsmClassName, $b),
$procExpr := convertStates($fsmState+, $procName, $b, collectStates($fsmState+, )),
$procSpec := proc $procName(self: Id, parent: Id, s: State, chs: Children, phase: Phase, aArgs: ActPhaseArgs) =
    $procExpr +
    insertGenericClauses($procName);
===>
fsmClass2Mcrl2Proc(class: $FWPART_$TOP$ $fsmClassName $fsmState+, $b) = $procSpec

[convertStates-1-element]
convertStates($fsmState, $procName, $b, $mid*) = convertState($fsmState, $procName, $b, $mid*)

[convertStates-list]
convertStates($fsmState $fsmState+, $procName, $b, $mid*) =
    convertState($fsmState, $procName, $b, $mid*) +
    convertStates($fsmState+, $procName, $b, $mid*)

[convertState-nobm]
<$actionClauseTuple*, $pc> := gatherComponentsFromActionClauses($fsmActionClause*, $procName, $fsmCurrentState, 1)
===>
convertState(state: $fsmCurrentState $fsmWhenClause* $fsmActionClause*, $procName, false, $mid*) =
    (


    convertWhenClauses($fsmWhenClause*, $procName, $fsmCurrentState, $actionClauseTuple*) +







    ((isStateCheck($fsmCurrentState)) && (isActPhase(phase)) && (!(initialized(chs))) &&
     (pc(aArgs) == 0) && (nrf(aArgs) == [])) ->
        start_initialization(self).
        $procName(self, parent, s, chs, phase,
                  actArgs([], children_to_ids(chs), 0, rsc(aArgs))) <>



    ((initialized(chs)) ->
    (
    (((isStateCheck($fsmCurrentState)) && (isActPhase(phase)) && (cq(aArgs) == []) && (pc(aArgs) == 0)) ->
        sum c:Command.(
           rc(parent, self, c).
           constructClauseSelectors($actionClauseTuple*, $procName)
    )) +

    combineActionClauseComponents($actionClauseTuple*, $procName, $fsmCurrentState)
    ))


    )

[convertState-bm]
$dataExpr := inAnyState(collectStates($fsmWhenClause*, )),
$procExpr := createObedientCommandAcceptor($fsmActionClause*, $procName, $mid*)
===>
convertState(state: $fsmCurrentState $fsmWhenClause* $fsmActionClause*, $procName, true, $mid*) =
    (


    ((isStateCheck($fsmCurrentState)) ->
    sum c: Command.
    (
    rc(parent, self, c).
    $procExpr
    )
    )


    )


[convertWhenClauses-empty]
$mcrl2CurrentState := toMcrlStateName($fsmCurrentState)
===>
convertWhenClauses(, $procName, $fsmCurrentState, $actionClauseTuple*) =
    (

    (((isStateCheck($fsmCurrentState)) && (isWhenPhase(phase))) ->
    ss(self, parent, s).
    move_phase(self, ActionPhase).
    $procName(self, parent, s, chs, ActionPhase, reset(aArgs)))

    )

[convertReferer-moveto]
$mcrl2NewState := toMcrlStateName($fsmNewState)
===>
convertReferer(move_to $fsmNewState, $procName, $mcrl2CurrentState, $actionClauseTuple*) =
    move_state(self, $mcrl2NewState).
    $procName(self, parent, $mcrl2NewState, chs, phase, aArgs)

[convertReferer-do]
<$fsmActionName, $start_pc, $mcrlActionCondition, $mcrlActionEffect> := getActionClauseTupleForActionName($actionClauseTuple*, $fsmActionName)
===>
convertReferer(do $fsmActionName, $procName, $mcrl2CurrentState, $actionClauseTuple*) =
    move_phase(self, ActionPhase).
    $mcrlActionEffect

[getActionClauseTupleForActionName-many-match]
<$fsmActionName, $start_pc, $mcrlActionCondition, $mcrlActionEffect> := $actionClauseTuple
===>
getActionClauseTupleForActionName($actionClauseTuple $actionClauseTuple*, $fsmActionName) =
    $actionClauseTuple

[getActionClauseTupleForActionName-many-nomatch]
<$fsmActionName2, $start_pc, $mcrlActionCondition, $mcrlActionEffect> := $actionClauseTuple,
$fsmActionName1 != $fsmActionName2
===>
getActionClauseTupleForActionName($actionClauseTuple $actionClauseTuple*, $fsmActionName1) =
    getActionClauseTupleForActionName($actionClauseTuple*, $fsmActionName1)

[convertWhenClauses-many]
$mcrl2CurrentState := toMcrlStateName($fsmCurrentState)
===>
convertWhenClauses(when ($fsmExpr) $fsmReferer $fsmWhenClause*, $procName, $fsmCurrentState, $actionClauseTuple*) =
    (
    ((isStateCheck($fsmCurrentState)) && (isWhenPhase(phase)) &&
     (convertExpr($fsmExpr))) ->
    convertReferer($fsmReferer, $procName, $mcrl2CurrentState, $actionClauseTuple*) <>

    (convertWhenClauses($fsmWhenClause*, $procName, $fsmCurrentState, $actionClauseTuple*))
    )


[gatherComponentsFromActionClauses-empty]
gatherComponentsFromActionClauses(, $procName, $fsmCurrentState, $pc) =
    <, $pc>

[gatherComponentsFromActionClauses-many]
$start_pc1 := $avail_pc1,
$avail_pc2 := $avail_pc1 + 1,
$mcrl2Command := toMcrlCmdName($fsmActionName),
<$procExpr, $avail_pc3> := convertStatements($fsmStatement*, $procName, $start_pc1, -1, $avail_pc2),
<$actionClauseTuple*, $avail_pc4> :=
    gatherComponentsFromActionClauses($fsmActionClause*, $procName, $fsmCurrentState, $avail_pc3)
===>
gatherComponentsFromActionClauses(action: $fsmActionName $fsmStatement* $fsmActionClause*,
                                  $procName, $fsmCurrentState, $avail_pc1) =
    <
    <
    $fsmActionName,

    $start_pc1,

    ((isStateCheck($fsmCurrentState)) && (isActPhase(phase)) && (cq(aArgs) == [])),

    ($procExpr)
    >
    $actionClauseTuple*, $avail_pc4>

[combineActionClauseComponents-empty]
combineActionClauseComponents(, $procName, $fsmCurrentState) = delta

[combineActionClauseComponents-many]
<$fsmActionName, $start_pc, $mcrlActionCondition, $mcrlActionEffect> := $actionClauseTuple
===>
combineActionClauseComponents($actionClauseTuple $actionClauseTuple*, $procName, $fsmCurrentState) =
    (

    ($mcrlActionCondition ->
    $mcrlActionEffect) +


    (combineActionClauseComponents($actionClauseTuple*, $procName, $fsmCurrentState)))

[constructClauseSelectors-empty]
constructClauseSelectors(, $procName) =

    ss(self, parent, s).
    ignored_command(self, c).
    $procName(self, parent, s, chs, phase, update_pc(aArgs, -1))


[constructClauseSelectors-many]
constructClauseSelectors(<$fsmActionName, $start_pc, $mcrlActionCondition, $mcrlActionEffect> $actionClauseTuple*,
                         $procName) =
    (isCommandCheck($fsmActionName) -> $procName(self, parent, s, chs, phase, update_pc(aArgs, $start_pc)) <> (
           constructClauseSelectors($actionClauseTuple*, $procName)))

[createObedientCommandAcceptor-empty]
createObedientCommandAcceptor(, $procName, $mid*) = constructClauseSelectors(, $procName)

[createObedientCommandAcceptor-many-nomatch]
contains($fsmActionName, $mid*) == false
===>
createObedientCommandAcceptor(action: $fsmActionName $fsmStatement* $fsmActionClause*, $procName, $mid*) =
    createObedientCommandAcceptor($fsmActionClause*, $procName, $mid*)

[createObedientCommandAcceptor-many-match]
contains($fsmActionName, $mid*) == true,
$mcrl2NewState := toMcrlStateName($fsmActionName),
$mid := toMcrlCmdName($fsmActionName)
===>
createObedientCommandAcceptor(action: $fsmActionName $fsmStatement* $fsmActionClause*, $procName, $mid*) =
    ((c == $mid) ->
    ss(self, parent, $mcrl2NewState).
    move_state(self, $mcrl2NewState).
    $procName(self, parent, $mcrl2NewState, chs, WhenPhase, (reset(aArgs)))
    <>
    createObedientCommandAcceptor($fsmActionClause*, $procName, $mid*))


[convertStatements-empty]
convertStatements(, $procName, $start_pc, $jump_pc, $avail_pc) =
    <
    (
    ((pc(aArgs) == $start_pc) ->
         noop_statement(self).
         ($procName(self, parent, s, chs, phase, update_pc(aArgs, $jump_pc))))
    )
    , $avail_pc>

[convertStatements-single]
convertStatements($fsmStatement, $procName, $start_pc, $jump_pc, $avail_pc) =
    convertStatement($fsmStatement, $procName, $start_pc, $jump_pc, $avail_pc)

[convertStatements-multiple]
$start_pc2 := $avail_pc1,
$avail_pc2 := $avail_pc1 + 1,
<$procExpr1, $avail_pc3> := convertStatement($fsmStatement, $procName, $start_pc1, $start_pc2, $avail_pc2),
<$procExpr2, $avail_pc4> := convertStatements($fsmStatement+, $procName, $start_pc2, $jump_pc, $avail_pc3)
===>
convertStatements($fsmStatement $fsmStatement+, $procName, $start_pc1, $jump_pc, $avail_pc1) =
    <$procExpr1 +
     $procExpr2, $avail_pc4>

[convertStatement-do]
convertStatement(do $fsmActionName $fsmChildrenSpec, $procName, $start_pc, $jump_pc, $avail_pc) =
    <
    (
    ((pc(aArgs) == $start_pc) ->
         queue_messages(self).
         ($procName(self, parent, s, chs, phase,
                    actArgs(send_command(toMcrlCmdName($fsmActionName),
                                         convertChildrenSpec($fsmChildrenSpec)), [],  $jump_pc, rsc(aArgs)))))
    )
    , $avail_pc>

[convertStatement-moveto]
$mcrl2NewState := toMcrlStateName($fsmNewState)
===>
convertStatement(move_to $fsmNewState, $procName, $start_pc, $jump_pc, $avail_pc) =
    <
    (
    ((pc(aArgs) == $start_pc) ->
          (ss(self, parent, $mcrl2NewState).
           move_phase(self, WhenPhase).
           $procName(self, parent, $mcrl2NewState, chs, ActionPhase, reset(aArgs))))
    )
    , $avail_pc>

[insertIfBlockingWaiter-1]
insertIfBlockingWaiter($procName, $pc) =
sum s1:State.(
         rs(id(head(busy_children(chs))), self, s1).
         $procName(self, parent, s,
                   update_busy(id(head(busy_children(chs))),
                               false,
                               update_state(id(head(busy_children(chs))), s1, chs)),
                   phase, update_pc(aArgs, $pc)))

[convertStatement-ifthenend]
$start_pc2 := $avail_pc1,
$avail_pc2 := $avail_pc1 + 1,
<$procExpr1, $avail_pc3> :=
    convertStatements($fsmStatement+, $procName, $start_pc2, $jump_pc, $avail_pc2),
$procExpr2 :=
    (
    ((pc(aArgs) == $start_pc1) ->
    (
    (busy_children(chs) != []) ->
      (
      insertIfBlockingWaiter($procName, $start_pc1)
      )
    <>
      (
      ((convertExpr($fsmExpr)) ->
        enter_then_clause(self).
        $procName(self, parent, s, chs, phase, update_pc(aArgs, $start_pc2)) <>
        skip_then_clause(self).
        $procName(self, parent, s, chs, phase, update_pc(aArgs, $jump_pc)))
      )
      )) +

      (
       $procExpr1
      )

    )
===>
convertStatement(if ( $fsmExpr ) then $fsmStatement+ endif, $procName, $start_pc1, $jump_pc, $avail_pc1) =
    <
    $procExpr2,
    $avail_pc3
    >

[convertStatement-ifthenelseend]
$start_pc2 := $avail_pc1,
$start_pc3 := $avail_pc1 + 1,
$avail_pc2 := $avail_pc1 + 2,
<$procExpr1, $avail_pc3> :=
    convertStatements($fsmStatement+1, $procName, $start_pc2, $jump_pc, $avail_pc2),
<$procExpr2, $avail_pc4> :=
    convertStatements($fsmStatement+2, $procName, $start_pc3, $jump_pc, $avail_pc3),
$procExpr3 :=
    (
    ((pc(aArgs) == $start_pc1) ->
    (
    (busy_children(chs) != []) ->
      (
      insertIfBlockingWaiter($procName, $start_pc1)
      )
    <>
      (
      ((convertExpr($fsmExpr)) ->
        enter_then_clause(self).
        $procName(self, parent, s, chs, phase, update_pc(aArgs, $start_pc2)) <>
        enter_else_clause(self).
        $procName(self, parent, s, chs, phase, update_pc(aArgs, $start_pc3)))
      )
      )) +

      (
       $procExpr1
      ) +

      (
      $procExpr2
      )

    )
===>
convertStatement(if ( $fsmExpr ) then $fsmStatement+1 else $fsmStatement+2 endif, $procName, $start_pc1, $jump_pc, $avail_pc1) =
    <
    $procExpr3,
    $avail_pc4
    >


[convertExpr-and]
convertExpr($fsmExpr0 and $fsmExpr1) = convertExpr($fsmExpr0) && convertExpr($fsmExpr1)
[convertExpr-or]
convertExpr($fsmExpr0 or $fsmExpr1) = convertExpr($fsmExpr0) || convertExpr($fsmExpr1)


[convertExpr-allchildren]
convertExpr($fsmChildrenAllFwChildren in_state $fsmStateNameSpec) =
    all_in_state(convertChildrenSpec($fsmChildrenAllFwChildren), convertStateNameSpec($fsmStateNameSpec))
[convertExpr-anychildren]
convertExpr($fsmChildrenAnyFwChildren in_state $fsmStateNameSpec) =
    any_in_state(convertChildrenSpec($fsmChildrenAnyFwChildren), convertStateNameSpec($fsmStateNameSpec))

[convertExpr-allinstatespecific]
convertExpr($fsmChildrenAllSpecific in_state $fsmStateNameSpec) =
    all_in_state(convertChildrenSpec($fsmChildrenAllSpecific), convertStateNameSpec($fsmStateNameSpec))
[convertExpr-anyinstatespecific]
convertExpr($fsmChildrenAnySpecific in_state $fsmStateNameSpec) =
    any_in_state(convertChildrenSpec($fsmChildrenAnySpecific), convertStateNameSpec($fsmStateNameSpec))

[convertExpr-notallchildren]
convertExpr(not ($fsmChildrenAllFwChildren) in_state $fsmStateNameSpec) =
    !(all_in_state(convertChildrenSpec($fsmChildrenAllFwChildren), convertStateNameSpec($fsmStateNameSpec)))

[convertExpr-notallinstatespecific]
convertExpr(not ($fsmChildrenAllSpecific) in_state $fsmStateNameSpec) =
    !(all_in_state(convertChildrenSpec($fsmChildrenAllSpecific), convertStateNameSpec($fsmStateNameSpec)))

[convertExpr-allchildren-not]
convertExpr($fsmChildrenAllFwChildren not_in_state $fsmStateNameSpec) =
    !(any_in_state(convertChildrenSpec($fsmChildrenAllFwChildren), convertStateNameSpec($fsmStateNameSpec)))
[convertExpr-anychildren-not]
convertExpr($fsmChildrenAnyFwChildren not_in_state $fsmStateNameSpec) =
    !(all_in_state(convertChildrenSpec($fsmChildrenAnyFwChildren), convertStateNameSpec($fsmStateNameSpec)))
[convertExpr-allinstatespecific-not]
convertExpr($fsmChildrenAllSpecific not_in_state $fsmStateNameSpec) =
    !(any_in_state(convertChildrenSpec($fsmChildrenAllSpecific), convertStateNameSpec($fsmStateNameSpec)))
[convertExpr-anyinstatespecific-not]
convertExpr($fsmChildrenAnySpecific not_in_state $fsmStateNameSpec) =
    !(all_in_state(convertChildrenSpec($fsmChildrenAnySpecific), convertStateNameSpec($fsmStateNameSpec)))

[convertExpr-bracket]
convertExpr(($fsmExpr)) = (convertExpr($fsmExpr))

[convertChildrenSpec-alltype]
convertChildrenSpec($ALL$ $fsmClassName) = filter_children(chs, concat($fsmClassName, _CLASS))
[convertChildrenSpec-anytype]
convertChildrenSpec($ANY$ $fsmClassName) = filter_children(chs, concat($fsmClassName, _CLASS))
[convertChildren-all]
convertChildrenSpec($fsmChildrenAllFwChildren) = chs
[convertChildren-any]
convertChildrenSpec($fsmChildrenAnyFwChildren) = chs

[convertStateNameSpec-single]
convertStateNameSpec($fsmStateName) = [toMcrlStateName($fsmStateName)]
[convertStateNameSpec-single-in-multiple]
convertStateNameSpec({$fsmStateName}) = [toMcrlStateName($fsmStateName)]
[convertStateNameSpec-multiple]
[ $dataExprs ] := convertStateNameSpec({$fsmStateNames})
===>
convertStateNameSpec({$fsmStateName, $fsmStateNames}) = [toMcrlStateName($fsmStateName), $dataExprs]

[inAnyState-empty]
inAnyState() = false
[inAnyState-one]
inAnyState($mid) = isStateCheck($mid, s1)
[inAnyState-many]
inAnyState($mid $mid+) = isStateCheck($mid, s1) || inAnyState($mid+)


[isStateCheck-2]
mid(#midHead #midTail) := toMcrlIsFunction(toMcrlStateName($mid1))
===>
isStateCheck($mid1, $mid2) = mid(#midHead #midTail)($mid2)

[isStateCheck-1]
isStateCheck($mid) = isStateCheck($mid, s)

[isCommand-c]
mid(#midHead #midTail) := toMcrlIsFunction(toMcrlCmdName($mid))
===>
isCommandCheck($mid) = mid(#midHead #midTail)(c)

[toMcrlIsFunc]
toMcrlIsFunction(mid(#midHead #midTail)) = mid(is #midHead #midTail)

[toMcrlState-1]
toMcrlStateName(mid(#midHead #midTail)) = concat(S_, mid(#midHead #midTail))

[toMcrlCmd-1]
toMcrlCmdName(mid(#midHead #midTail)) = concat(C_, mid(#midHead #midTail))



[generateSorts]
fsmGenerateSorts($fsmClass+) =
    PType = struct convertClassNamesToTypeConstrDecl(collectClasses($fsmClass+,));
    State = struct S_FSM_UNINITIALIZED ? isS_FSM_UNINITIALIZED | convertStateNamesToStateConstrDecl(collectStates($fsmClass+,));
    Command = struct convertActionNamesToCmdConstrDecl(collectCommands($fsmClass+,));

[convertStateNamesToSortDecl-single]
convertStateNamesToStateConstrDecl($mid) = toMcrlStateName($mid) ? toMcrlIsFunction(toMcrlStateName($mid))
[convertStateNamesToSortDecl-multi]
convertStateNamesToStateConstrDecl($mid $mid+) = convertStateNamesToStateConstrDecl($mid) | convertStateNamesToStateConstrDecl($mid+)

[convertActionNamesToSortDecl-single]
convertActionNamesToCmdConstrDecl($mid) = toMcrlCmdName($mid) ? toMcrlIsFunction(toMcrlCmdName($mid))
[convertActionNamesToSortDecl-multi]
convertActionNamesToCmdConstrDecl($mid $mid+) = convertActionNamesToCmdConstrDecl($mid) | convertActionNamesToCmdConstrDecl($mid+)

[convertClassNamesToSortDecl-single]
convertClassNamesToTypeConstrDecl($mid) = $mid ? toMcrlIsFunction($mid)
[convertClassNamesToTypeSortDecl-multi]
convertClassNamesToTypeConstrDecl($mid $mid+) = convertClassNamesToTypeConstrDecl($mid) | convertClassNamesToTypeConstrDecl($mid+)


[collect-class-definition]
collectClasses(class: $FWPART_$TOP$ $fsmClassName $fsmState+, $mid*) = addToSet($fsmClassName, $mid*)

[collect-class-exprall]
collectClasses($ALL$$fsmClassName, $mid*) = addToSet(concat($fsmClassName, _CLASS), $mid*)

[collect-class-exprany]
collectClasses($ANY$$fsmClassName, $mid*) = addToSet(concat($fsmClassName, _CLASS), $mid*)

[collect-command]
collectCommands(action: $fsmActionName $fsmStatement+, $mid*) = addToSet($fsmActionName, $mid*)

[collect-state]
collectStates(state: $fsmStateName $fsmWhenClause+ $fsmActionClause*, $mid*) = addToSet($fsmStateName, $mid*)

[collect-state-when]
collectStates(when ( $fsmExpr ) move_to $fsmStateName, $mid*) = addToSet($fsmStateName, $mid*)

[collect-state-statenamespec]
$mid*1 := collectStates({ $fsmStateNameSpecs* }, $mid*)
===>
collectStates( { $fsmStateName, $fsmStateNameSpecs*}, $mid*) = addToSet($fsmStateName, $mid*1)

[collect-state-statenamespec-1-element]
collectStates( $fsmStateName, $mid*) = addToSet($fsmStateName, $mid*)

[mcrl2GetPTypes-1]
mcrl2GetPTypes($procSpec+) =
    PType = struct mcrl2PTypesFromProcSpecs($procSpec+);

[mcrl2PTypesFromProcSpecs-one]
mcrl2PTypesFromProcSpecs(proc $procDecl+) =
    mcrl2PTypesFromProcDecls($procDecl+)

[mcrl2PTypesFromProcSpecs-many]
mcrl2PTypesFromProcSpecs(proc $procDecl+ $procSpec+) =
    mcrl2PTypesFromProcDecls($procDecl+) | mcrl2PTypesFromProcSpecs($procSpec+)

[mcrl2PTypesFromProcDecls-one-1]
mcrl2PTypesFromProcDecls($mid = $procExpr;) =
    convertClassNamesToTypeConstrDecl($mid)

[mcrl2PTypesFromProcDecls-many-1]
mcrl2PTypesFromProcDecls($mid = $procExpr; $procDecl+) =
    convertClassNamesToTypeConstrDecl($mid) | mcrl2PTypesFromProcDecls($procDecl+)

[mcrl2PTypesFromProcDecls-one-2]
mcrl2PTypesFromProcDecls($mid ( $idsDecls ) = $procExpr;) =
    convertClassNamesToTypeConstrDecl($mid)

[mcrl2PTypesFromProcDecls-many-2]
mcrl2PTypesFromProcDecls($mid ( $idsDecls ) = $procExpr; $procDecl+) =
    convertClassNamesToTypeConstrDecl($mid) | mcrl2PTypesFromProcDecls($procDecl+)

[addToSet-empty]
addToSet($mid,) = $mid
[addToSet-multisame]
addToSet($mid,$mid $mid*) = $mid $mid*
[addToSet-multidiff]
$mid != $mid1
===>
addToSet($mid,$mid1 $mid*) = $mid1 addToSet($mid, $mid*)
\end{listing}

\subsection{genericclauses.sdf}
\tiny
\begin{listing}[5]{1}
module genericclauses

imports basic/Comments
imports mcrlt

exports

context-free syntax

  insertGenericClauses(MId) -> ProcExpr

hiddens
variables
  "$fsmClassName" -> MId
\end{listing}

\subsection{genericclauses.asf}
\tiny
\begin{listing}[5]{1}
equations

[insertGeneric]
insertGenericClauses($fsmClassName) =
    (




    sum id:Id.(sum s1:State.(((isActPhase(phase)) && (is_child(id, chs)) &&
                             (pc(aArgs) == 0) && (initialized(chs))) ->
       rs(id, self, s1).
       move_phase(self, WhenPhase).
       $fsmClassName(self, parent, s, update_busy(id, false, update_state(id, s1, chs)), WhenPhase, reset(aArgs))))
    +

    sum id:Id.(sum s1:State.(((isActPhase(phase)) && (is_child(id, chs)) &&
           ((pc(aArgs) > 0) ||
            ((pc(aArgs) == -1) && (cq(aArgs) != [])))) ->
       rs(id, self, s1).
       $fsmClassName(self, parent, s,
                     update_busy(id,
                                 false,
                                 update_state(id, s1, chs)),
                     phase, aArgs))) +

    ((isActPhase(phase)) && (cq(aArgs) != []) && (!(initialized(chs)))) ->
        sc(self, id(head(cq(aArgs))), command(head(cq(aArgs)))).
            $fsmClassName(self, parent, s,
                          update_busy(id(head(cq(aArgs))), true, chs),
                          phase,
                          actArgs(tail(cq(aArgs)),
                                  (id(head(cq(aArgs))))|>(nrf(aArgs)), pc(aArgs), rsc(aArgs))) +

    ((isActPhase(phase)) && (cq(aArgs) != []) && initialized(chs)) ->
        sc(self, id(head(cq(aArgs))), command(head(cq(aArgs)))).
            $fsmClassName(self, parent, s,
                          update_busy(id(head(cq(aArgs))), true, chs),
                          phase,
                          actArgs(tail(cq(aArgs)), [], pc(aArgs), rsc(aArgs))) +


    sum id:Id.(sum s1:State.(
            ((isActPhase(phase)) && (cq(aArgs) == []) && (nrf(aArgs) != []) && (is_child(id, chs)) &&
             (!(initialized(chs)))) ->
                    rs(id, self, s1).
                    ((initialized(update_state(id,s1,chs))) ->
                         end_initialization(self).
                         $fsmClassName(self, parent, s, update_state(id,s1,chs), phase,
                                       actArgs(cq(aArgs), remove(id, nrf(aArgs)), -1, rsc(aArgs)
                                       )) <>
                         $fsmClassName(self, parent, s, update_state(id,s1,chs), phase,
                                       actArgs(cq(aArgs), remove(id, nrf(aArgs)), -1, rsc(aArgs)
                                       ))))) +

    ((isActPhase(phase)) && (cq(aArgs) == []) && (initialized(chs)) && (pc(aArgs) == -1)) ->
        move_phase(self, WhenPhase).
        $fsmClassName(self, parent, s, chs, WhenPhase, reset(aArgs))

    )
\end{listing}

\section{Wheel subsystem}
\tiny

\begin{listing}[5]{1}
class: $FWPART_$TOP$RPC_Wheel_CLASS
!panel: CMS_RPCfwSupervisor/CMS_RPCfwSupervisorRPC_Wheel.pnl
    state: OFF  !color: FwStateOKNotPhysics
        when ( $ANY$FwCHILDREN in_state ERROR )  move_to ERROR

        when ( $ANY$FwCHILDREN in_state RAMPING )  move_to RAMPING
        when ( $ALL$FwCHILDREN in_state STANDBY )  move_to STANDBY

        when ( $ALL$FwCHILDREN in_state ON )  move_to ON

        when ( ( $ALL$FwCHILDREN not_in_state OFF ) and
       ( $ANY$FwCHILDREN in_state STANDBY ) )  move_to STANDBY
        action: STANDBY !visible: 1
            do STANDBY $ALL$FwCHILDREN
        action: OFF !visible: 1
            do OFF $ALL$FwCHILDREN
        action: ON  !visible: 1
            do ON $ALL$FwCHILDREN
    state: STANDBY  !color: FwStateOKNotPhysics
        when ( $ANY$FwCHILDREN in_state ERROR )  move_to ERROR

        when ( $ANY$FwCHILDREN in_state RAMPING )  move_to RAMPING
        when ( $ALL$FwCHILDREN in_state ON )  move_to ON

        when ( $ANY$FwCHILDREN in_state OFF )  move_to OFF

        action: ON  !visible: 1
            do ON $ALL$FwCHILDREN
        action: OFF !visible: 1
            do OFF $ALL$FwCHILDREN
        action: STANDBY !visible: 1
            do STANDBY $ALL$FwCHILDREN
    state: ON   !color: FwStateOKPhysics
        when ( $ANY$FwCHILDREN in_state ERROR )  move_to ERROR

        when ( $ANY$FwCHILDREN in_state RAMPING)  move_to RAMPING
        when ( $ANY$FwCHILDREN in_state OFF )  move_to OFF

        when ( $ANY$FwCHILDREN in_state STANDBY )  move_to STANDBY

        action: STANDBY !visible: 1
            do STANDBY $ALL$FwCHILDREN
        action: OFF !visible: 1
            do OFF $ALL$FwCHILDREN
        action: ON  !visible: 1
            do ON $ALL$FwCHILDREN
    state: ERROR    !color: FwStateAttention3
        when ( ( $ANY$FwCHILDREN in_state RAMPING ) and
( $ALL$FwCHILDREN not_in_state ERROR ) )  move_to RAMPING
        when ( ( $ANY$FwCHILDREN in_state OFF ) and
( $ALL$FwCHILDREN not_in_state ERROR ) )  move_to OFF

        when ( $ALL$FwCHILDREN in_state ON )  move_to ON

        when ( ( $ANY$FwCHILDREN in_state STANDBY ) and
( $ALL$FwCHILDREN not_in_state ERROR ) )  move_to STANDBY

        action: ON  !visible: 1
            do ON $ALL$FwCHILDREN
        action: STANDBY !visible: 1
            do STANDBY $ALL$FwCHILDREN
        action: OFF !visible: 1
            do OFF $ALL$FwCHILDREN
    state: RAMPING  !color: FwStateAttention1
        when ( $ANY$FwCHILDREN in_state ERROR )  move_to ERROR
        when ( $ALL$FwCHILDREN in_state ON )  move_to ON
        when ( $ALL$FwCHILDREN in_state STANDBY )  move_to STANDBY
        when ( ( $ALL$FwCHILDREN not_in_state RAMPING ) and
       ( $ANY$FwCHILDREN in_state OFF ) )  move_to OFF
        when ( ( $ALL$FwCHILDREN not_in_state RAMPING ) and
       ( $ANY$FwCHILDREN in_state STANDBY ) )  move_to STANDBY
        action: STANDBY !visible: 1
            do STANDBY $ALL$FwCHILDREN
        action: OFF !visible: 1
            do OFF $ALL$FwCHILDREN
        action: ON  !visible: 1
            do ON $ALL$FwCHILDREN

\end{listing}

\section{Wheel translation}
\tiny
\begin{listing}[5]{1}
sort
    Phase = struct WhenPhase ?isWhenPhase | ActionPhase ?isActPhase;
    ActPhaseArgs = struct actArgs(cq: CommandQueue, nrf: IdList, pc: Int, rsc: Bool);
    Id = Nat;
    IdList = List(Id);
    Child = struct child(id:Id, state:State, ptype:PType, busy:Bool);
    Children = List(Child);
    ChildCommand = struct childcommand(id:Id, command:Command);
    CommandQueue = List(ChildCommand);
    PType = struct RPC_Wheel_CLASS ? isRPC_Wheel_CLASS;
    State = struct S_FSM_UNINITIALIZED ? isS_FSM_UNINITIALIZED | S_OFF ? isS_OFF | S_ERROR ? isS_ERROR |
                   S_RAMPING ? isS_RAMPING | S_STANDBY ? isS_STANDBY | S_ON ? isS_ON;
    Command = struct C_STANDBY ? isC_STANDBY | C_OFF ? isC_OFF | C_ON ? isC_ON;

act
    rc,sc,cc: Id # Id # Command;
    rs,ss,cs: Id # Id # State;
    move_state: Id # State;
    move_phase: Id # Phase;
    ignored_command: Id # Command;
    queue_messages: Id;
    enter_then_clause: Id;
    enter_else_clause: Id;
    skip_then_clause: Id;
    start_initialization: Id;
    end_initialization: Id;
    noop_statement: Id;

map
    in_state: Child # State -> Bool;
    in_any_of_states: Child # List(State) -> Bool;
    any_in_state: Children # List(State) -> Bool;
    all_in_state: Children # List(State) -> Bool;
    is_child: Id # Children -> Bool;
    filter_children: Children # PType -> Children;
    filter_children_accu: Children # PType # Children -> Children;
    send_command: Command # Children -> CommandQueue;
    update_state: Id # State # Children -> Children;
    update_busy: Id # Bool # Children -> Children;
    update_busy_all: Bool # Children -> Children;
    remove: Id # IdList -> IdList;
    initialized: Children -> Bool;
    children_to_ids: Children -> IdList;
    busy_children: Children -> Children;
    update_pc: ActPhaseArgs # Int -> ActPhaseArgs;
    reset: ActPhaseArgs -> ActPhaseArgs;

var
    cq: CommandQueue;
    chs,chs_accu: Children;
    ch: Child;
    id,id1: Id;
    ids: IdList;
    s,s1: State;
    sl: List(State);
    t,t1: PType;
    cmd: Command;
    b, b1, b2: Bool;
    pc, pc1: Int;

eqn
    in_state(child(id,s,t,b),s1) = s == s1;

    in_any_of_states(ch,[]) = false;
    in_any_of_states(ch,s|>sl) = in_state(ch,s) || in_any_of_states(ch,sl);

    any_in_state([], sl) = false;
    any_in_state(ch|> chs, sl) = in_any_of_states(ch,sl) || any_in_state(chs,sl);

    all_in_state([], sl) = true;
    all_in_state(ch|> chs, sl) = in_any_of_states(ch,sl) && all_in_state(chs,sl);

    is_child(id, []) = false;
    is_child(id, child(id1,s,t,b) |> chs) = id == id1 || is_child(id, chs);

    filter_children(chs, t) = filter_children_accu(chs, t, []);

    filter_children_accu([],t,chs_accu) = chs_accu;
    filter_children_accu(child(id,s,t1,b) |> chs, t, chs_accu) =
        if(t==t1,
           filter_children_accu(chs, t, child(id,s,t,b) |> chs_accu),
           filter_children_accu(chs, t, chs_accu));

    send_command(cmd, []) = [];
    send_command(cmd, child(id,s,t,b) |> chs) =
        childcommand(id,cmd) |> send_command(cmd,chs);

    update_state(id, s, []) = [];
    update_state(id, s, child(id1,s1,t,b) |> chs) =
        if(id==id1,
           child(id1,s,t,b) |> chs,
           child(id1,s1,t,b) |> update_state(id,s,chs));

    update_busy(id, b, []) = [];
    update_busy(id, b, child(id1,s,t,b1) |> chs) =
        if(id==id1,
           child(id1,s,t,b) |> chs,
           child(id1,s,t,b1) |> update_busy(id,b,chs));

    update_busy_all(b, []) = [];
    update_busy_all(b, child(id,s,t,b1) |> chs) = child(id,s,t,b) |> update_busy_all(b, chs);

    remove(id, []) = [];
    remove(id, id1 |> ids) =
      if (id == id1,
        ids,
        id1 |> remove(id, ids));

    initialized(chs) = !any_in_state(chs, [S_FSM_UNINITIALIZED]);

    children_to_ids([]) = [];
    children_to_ids(child(id,s,t,b) |> chs) = id |> children_to_ids(chs);

    busy_children([]) = [];
    busy_children(child(id,s,t,true) |> chs) = child(id,s,t,true) |> busy_children(chs);
    busy_children(child(id,s,t,false) |> chs) = busy_children(chs);

    update_pc(actArgs(cq, ids, pc, b), pc1) = actArgs(cq, ids, pc1, b);

    reset(actArgs(cq, ids, pc, b)) = actArgs([], [], 0, b);

proc RPC_Wheel_CLASS(self: Id, parent: Id, s: State, chs: Children, phase: Phase, aArgs: ActPhaseArgs) =
    (


    (
    ((isS_OFF(s)) && (isWhenPhase(phase)) &&
     (any_in_state(chs, [S_ERROR]))) ->
    move_state(self, S_ERROR).
    RPC_Wheel_CLASS(self, parent, S_ERROR, chs, phase, aArgs) <>
    ((
    ((isS_OFF(s)) && (isWhenPhase(phase)) &&
     (any_in_state(chs, [S_RAMPING]))) ->
    move_state(self, S_RAMPING).
    RPC_Wheel_CLASS(self, parent, S_RAMPING, chs, phase, aArgs) <>
    ((
    ((isS_OFF(s)) && (isWhenPhase(phase)) &&
     (all_in_state(chs, [S_STANDBY]))) ->
    move_state(self, S_STANDBY).
    RPC_Wheel_CLASS(self, parent, S_STANDBY, chs, phase, aArgs) <>
    ((
    ((isS_OFF(s)) && (isWhenPhase(phase)) &&
     (all_in_state(chs, [S_ON]))) ->
    move_state(self, S_ON).
    RPC_Wheel_CLASS(self, parent, S_ON, chs, phase, aArgs) <>
    ((
    ((isS_OFF(s)) && (isWhenPhase(phase)) &&
     ((!(any_in_state(chs, [S_OFF]))) && (any_in_state(chs, [S_STANDBY])))) ->
    move_state(self, S_STANDBY).
    RPC_Wheel_CLASS(self, parent, S_STANDBY, chs, phase, aArgs) <>
    ((
    (((isS_OFF(s)) && (isWhenPhase(phase))) ->
    ss(self, parent, s).
    move_phase(self, ActionPhase).
    RPC_Wheel_CLASS(self, parent, s, chs, ActionPhase, reset(aArgs)))
    ))
    ))
    ))
    ))
    ))
    ) +

    ((isS_OFF(s)) && (isActPhase(phase)) && (!(initialized(chs))) &&
     (pc(aArgs) == 0) && (nrf(aArgs) == [])) ->
        start_initialization(self).
        RPC_Wheel_CLASS(self, parent, s, chs, phase,
                  actArgs([], children_to_ids(chs), 0, rsc(aArgs))) <>

    ((initialized(chs)) ->
    (
    (((isS_OFF(s)) && (isActPhase(phase)) && (cq(aArgs) == []) && (pc(aArgs) == 0)) ->
        sum c:Command.(
           rc(parent, self, c).
           (isC_STANDBY(c) -> RPC_Wheel_CLASS(self, parent, s, chs, phase, update_pc(aArgs, 1)) <> (
           (isC_OFF(c) -> RPC_Wheel_CLASS(self, parent, s, chs, phase, update_pc(aArgs, 2)) <> (
           (isC_ON(c) -> RPC_Wheel_CLASS(self, parent, s, chs, phase, update_pc(aArgs, 3)) <> (
           ss(self, parent, s).
    ignored_command(self, c).
    RPC_Wheel_CLASS(self, parent, s, chs, phase, update_pc(aArgs, -1))))))))
    )) +

    (
    (((isS_OFF(s)) && (isActPhase(phase)) && (cq(aArgs) == [])) ->
    ((
    ((pc(aArgs) == 1) ->
         queue_messages(self).
         (RPC_Wheel_CLASS(self, parent, s, chs, phase,
                    actArgs(send_command(C_STANDBY,
                                         chs), [],  -1, rsc(aArgs)))))
    ))) +

    ((
    (((isS_OFF(s)) && (isActPhase(phase)) && (cq(aArgs) == [])) ->
    ((
    ((pc(aArgs) == 2) ->
         queue_messages(self).
         (RPC_Wheel_CLASS(self, parent, s, chs, phase,
                    actArgs(send_command(C_OFF,
                                         chs), [],  -1, rsc(aArgs)))))
    ))) +

    ((
    (((isS_OFF(s)) && (isActPhase(phase)) && (cq(aArgs) == [])) ->
    ((
    ((pc(aArgs) == 3) ->
         queue_messages(self).
         (RPC_Wheel_CLASS(self, parent, s, chs, phase,
                    actArgs(send_command(C_ON,
                                         chs), [],  -1, rsc(aArgs)))))
    ))) +

    (delta))))))
    ))

    ) +

    (

    (
    ((isS_STANDBY(s)) && (isWhenPhase(phase)) &&
     (any_in_state(chs, [S_ERROR]))) ->
    move_state(self, S_ERROR).
    RPC_Wheel_CLASS(self, parent, S_ERROR, chs, phase, aArgs) <>
    ((
    ((isS_STANDBY(s)) && (isWhenPhase(phase)) &&
     (any_in_state(chs, [S_RAMPING]))) ->
    move_state(self, S_RAMPING).
    RPC_Wheel_CLASS(self, parent, S_RAMPING, chs, phase, aArgs) <>
    ((
    ((isS_STANDBY(s)) && (isWhenPhase(phase)) &&
     (all_in_state(chs, [S_ON]))) ->
    move_state(self, S_ON).
    RPC_Wheel_CLASS(self, parent, S_ON, chs, phase, aArgs) <>
    ((
    ((isS_STANDBY(s)) && (isWhenPhase(phase)) &&
     (any_in_state(chs, [S_OFF]))) ->
    move_state(self, S_OFF).
    RPC_Wheel_CLASS(self, parent, S_OFF, chs, phase, aArgs) <>
    ((
    (((isS_STANDBY(s)) && (isWhenPhase(phase))) ->
    ss(self, parent, s).
    move_phase(self, ActionPhase).
    RPC_Wheel_CLASS(self, parent, s, chs, ActionPhase, reset(aArgs)))
    ))
    ))
    ))
    ))
    ) +

    ((isS_STANDBY(s)) && (isActPhase(phase)) && (!(initialized(chs))) &&
     (pc(aArgs) == 0) && (nrf(aArgs) == [])) ->
        start_initialization(self).
        RPC_Wheel_CLASS(self, parent, s, chs, phase,
                  actArgs([], children_to_ids(chs), 0, rsc(aArgs))) <>

    ((initialized(chs)) ->
    (
    (((isS_STANDBY(s)) && (isActPhase(phase)) && (cq(aArgs) == []) && (pc(aArgs) == 0)) ->
        sum c:Command.(
           rc(parent, self, c).
           (isC_ON(c) -> RPC_Wheel_CLASS(self, parent, s, chs, phase, update_pc(aArgs, 1)) <> (
           (isC_OFF(c) -> RPC_Wheel_CLASS(self, parent, s, chs, phase, update_pc(aArgs, 2)) <> (
           (isC_STANDBY(c) -> RPC_Wheel_CLASS(self, parent, s, chs, phase, update_pc(aArgs, 3)) <> (
           ss(self, parent, s).
    ignored_command(self, c).
    RPC_Wheel_CLASS(self, parent, s, chs, phase, update_pc(aArgs, -1))))))))
    )) +

    (
    (((isS_STANDBY(s)) && (isActPhase(phase)) && (cq(aArgs) == [])) ->
    ((
    ((pc(aArgs) == 1) ->
         queue_messages(self).
         (RPC_Wheel_CLASS(self, parent, s, chs, phase,
                    actArgs(send_command(C_ON,
                                         chs), [],  -1, rsc(aArgs)))))
    ))) +

    ((
    (((isS_STANDBY(s)) && (isActPhase(phase)) && (cq(aArgs) == [])) ->
    ((
    ((pc(aArgs) == 2) ->
         queue_messages(self).
         (RPC_Wheel_CLASS(self, parent, s, chs, phase,
                    actArgs(send_command(C_OFF,
                                         chs), [],  -1, rsc(aArgs)))))
    ))) +

    ((
    (((isS_STANDBY(s)) && (isActPhase(phase)) && (cq(aArgs) == [])) ->
    ((
    ((pc(aArgs) == 3) ->
         queue_messages(self).
         (RPC_Wheel_CLASS(self, parent, s, chs, phase,
                    actArgs(send_command(C_STANDBY,
                                         chs), [],  -1, rsc(aArgs)))))
    ))) +

    (delta))))))
    ))

    ) +

    (

    (
    ((isS_ON(s)) && (isWhenPhase(phase)) &&
     (any_in_state(chs, [S_ERROR]))) ->
    move_state(self, S_ERROR).
    RPC_Wheel_CLASS(self, parent, S_ERROR, chs, phase, aArgs) <>
    ((
    ((isS_ON(s)) && (isWhenPhase(phase)) &&
     (any_in_state(chs, [S_RAMPING]))) ->
    move_state(self, S_RAMPING).
    RPC_Wheel_CLASS(self, parent, S_RAMPING, chs, phase, aArgs) <>
    ((
    ((isS_ON(s)) && (isWhenPhase(phase)) &&
     (any_in_state(chs, [S_OFF]))) ->
    move_state(self, S_OFF).
    RPC_Wheel_CLASS(self, parent, S_OFF, chs, phase, aArgs) <>
    ((
    ((isS_ON(s)) && (isWhenPhase(phase)) &&
     (any_in_state(chs, [S_STANDBY]))) ->
    move_state(self, S_STANDBY).
    RPC_Wheel_CLASS(self, parent, S_STANDBY, chs, phase, aArgs) <>
    ((
    (((isS_ON(s)) && (isWhenPhase(phase))) ->
    ss(self, parent, s).
    move_phase(self, ActionPhase).
    RPC_Wheel_CLASS(self, parent, s, chs, ActionPhase, reset(aArgs)))
    ))
    ))
    ))
    ))
    ) +

    ((isS_ON(s)) && (isActPhase(phase)) && (!(initialized(chs))) &&
     (pc(aArgs) == 0) && (nrf(aArgs) == [])) ->
        start_initialization(self).
        RPC_Wheel_CLASS(self, parent, s, chs, phase,
                  actArgs([], children_to_ids(chs), 0, rsc(aArgs))) <>
    ((initialized(chs)) ->
    (
    (((isS_ON(s)) && (isActPhase(phase)) && (cq(aArgs) == []) && (pc(aArgs) == 0)) ->
        sum c:Command.(
           rc(parent, self, c).
           (isC_STANDBY(c) -> RPC_Wheel_CLASS(self, parent, s, chs, phase, update_pc(aArgs, 1)) <> (
           (isC_OFF(c) -> RPC_Wheel_CLASS(self, parent, s, chs, phase, update_pc(aArgs, 2)) <> (
           (isC_ON(c) -> RPC_Wheel_CLASS(self, parent, s, chs, phase, update_pc(aArgs, 3)) <> (
           ss(self, parent, s).
    ignored_command(self, c).
    RPC_Wheel_CLASS(self, parent, s, chs, phase, update_pc(aArgs, -1))))))))
    )) +
    (
    (((isS_ON(s)) && (isActPhase(phase)) && (cq(aArgs) == [])) ->
    ((
    ((pc(aArgs) == 1) ->
         queue_messages(self).
         (RPC_Wheel_CLASS(self, parent, s, chs, phase,
                    actArgs(send_command(C_STANDBY,
                                         chs), [],  -1, rsc(aArgs)))))
    ))) +

    ((
    (((isS_ON(s)) && (isActPhase(phase)) && (cq(aArgs) == [])) ->
    ((
    ((pc(aArgs) == 2) ->
         queue_messages(self).
         (RPC_Wheel_CLASS(self, parent, s, chs, phase,
                    actArgs(send_command(C_OFF,
                                         chs), [],  -1, rsc(aArgs)))))
    ))) +

    ((
    (((isS_ON(s)) && (isActPhase(phase)) && (cq(aArgs) == [])) ->
    ((
    ((pc(aArgs) == 3) ->
         queue_messages(self).
         (RPC_Wheel_CLASS(self, parent, s, chs, phase,
                    actArgs(send_command(C_ON,
                                         chs), [],  -1, rsc(aArgs)))))
    ))) +

    (delta))))))
    ))

    ) +

    (

    (
    ((isS_ERROR(s)) && (isWhenPhase(phase)) &&
     ((any_in_state(chs, [S_RAMPING])) && (!(any_in_state(chs, [S_ERROR]))))) ->
    move_state(self, S_RAMPING).
    RPC_Wheel_CLASS(self, parent, S_RAMPING, chs, phase, aArgs) <>
    ((
    ((isS_ERROR(s)) && (isWhenPhase(phase)) &&
     ((any_in_state(chs, [S_OFF])) && (!(any_in_state(chs, [S_ERROR]))))) ->
    move_state(self, S_OFF).
    RPC_Wheel_CLASS(self, parent, S_OFF, chs, phase, aArgs) <>
    ((
    ((isS_ERROR(s)) && (isWhenPhase(phase)) &&
     (all_in_state(chs, [S_ON]))) ->
    move_state(self, S_ON).
    RPC_Wheel_CLASS(self, parent, S_ON, chs, phase, aArgs) <>
    ((
    ((isS_ERROR(s)) && (isWhenPhase(phase)) &&
     ((any_in_state(chs, [S_STANDBY])) && (!(any_in_state(chs, [S_ERROR]))))) ->
    move_state(self, S_STANDBY).
    RPC_Wheel_CLASS(self, parent, S_STANDBY, chs, phase, aArgs) <>
    ((
    (((isS_ERROR(s)) && (isWhenPhase(phase))) ->
    ss(self, parent, s).
    move_phase(self, ActionPhase).
    RPC_Wheel_CLASS(self, parent, s, chs, ActionPhase, reset(aArgs)))
    ))
    ))
    ))
    ))
    ) +

    ((isS_ERROR(s)) && (isActPhase(phase)) && (!(initialized(chs))) &&
     (pc(aArgs) == 0) && (nrf(aArgs) == [])) ->
        start_initialization(self).
        RPC_Wheel_CLASS(self, parent, s, chs, phase,
                  actArgs([], children_to_ids(chs), 0, rsc(aArgs))) <>
    ((initialized(chs)) ->
    (
    (((isS_ERROR(s)) && (isActPhase(phase)) && (cq(aArgs) == []) && (pc(aArgs) == 0)) ->
        sum c:Command.(
           rc(parent, self, c).
           (isC_ON(c) -> RPC_Wheel_CLASS(self, parent, s, chs, phase, update_pc(aArgs, 1)) <> (
           (isC_STANDBY(c) -> RPC_Wheel_CLASS(self, parent, s, chs, phase, update_pc(aArgs, 2)) <> (
           (isC_OFF(c) -> RPC_Wheel_CLASS(self, parent, s, chs, phase, update_pc(aArgs, 3)) <> (
           ss(self, parent, s).
    ignored_command(self, c).
    RPC_Wheel_CLASS(self, parent, s, chs, phase, update_pc(aArgs, -1))))))))
    )) +
    (
    (((isS_ERROR(s)) && (isActPhase(phase)) && (cq(aArgs) == [])) ->
    ((
    ((pc(aArgs) == 1) ->
         queue_messages(self).
         (RPC_Wheel_CLASS(self, parent, s, chs, phase,
                    actArgs(send_command(C_ON,
                                         chs), [],  -1, rsc(aArgs)))))
    ))) +
    ((
    (((isS_ERROR(s)) && (isActPhase(phase)) && (cq(aArgs) == [])) ->
    ((
    ((pc(aArgs) == 2) ->
         queue_messages(self).
         (RPC_Wheel_CLASS(self, parent, s, chs, phase,
                    actArgs(send_command(C_STANDBY,
                                         chs), [],  -1, rsc(aArgs)))))
    ))) +
    ((
    (((isS_ERROR(s)) && (isActPhase(phase)) && (cq(aArgs) == [])) ->
    ((
    ((pc(aArgs) == 3) ->
         queue_messages(self).
         (RPC_Wheel_CLASS(self, parent, s, chs, phase,
                    actArgs(send_command(C_OFF,
                                         chs), [],  -1, rsc(aArgs)))))
    ))) +
    (delta))))))
    ))

    ) +

    (

    (
    ((isS_RAMPING(s)) && (isWhenPhase(phase)) &&
     (any_in_state(chs, [S_ERROR]))) ->
    move_state(self, S_ERROR).
    RPC_Wheel_CLASS(self, parent, S_ERROR, chs, phase, aArgs) <>
    ((
    ((isS_RAMPING(s)) && (isWhenPhase(phase)) &&
     (all_in_state(chs, [S_ON]))) ->
    move_state(self, S_ON).
    RPC_Wheel_CLASS(self, parent, S_ON, chs, phase, aArgs) <>
    ((
    ((isS_RAMPING(s)) && (isWhenPhase(phase)) &&
     (all_in_state(chs, [S_STANDBY]))) ->
    move_state(self, S_STANDBY).
    RPC_Wheel_CLASS(self, parent, S_STANDBY, chs, phase, aArgs) <>
    ((
    ((isS_RAMPING(s)) && (isWhenPhase(phase)) &&
     ((!(any_in_state(chs, [S_RAMPING]))) && (any_in_state(chs, [S_OFF])))) ->
    move_state(self, S_OFF).
    RPC_Wheel_CLASS(self, parent, S_OFF, chs, phase, aArgs) <>
    ((
    ((isS_RAMPING(s)) && (isWhenPhase(phase)) &&
     ((!(any_in_state(chs, [S_RAMPING]))) && (any_in_state(chs, [S_STANDBY])))) ->
    move_state(self, S_STANDBY).
    RPC_Wheel_CLASS(self, parent, S_STANDBY, chs, phase, aArgs) <>
    ((
    (((isS_RAMPING(s)) && (isWhenPhase(phase))) ->
    ss(self, parent, s).
    move_phase(self, ActionPhase).
    RPC_Wheel_CLASS(self, parent, s, chs, ActionPhase, reset(aArgs)))
    ))
    ))
    ))
    ))
    ))
    ) +


    ((isS_RAMPING(s)) && (isActPhase(phase)) && (!(initialized(chs))) &&
     (pc(aArgs) == 0) && (nrf(aArgs) == [])) ->
        start_initialization(self).
        RPC_Wheel_CLASS(self, parent, s, chs, phase,
                  actArgs([], children_to_ids(chs), 0, rsc(aArgs))) <>
    ((initialized(chs)) ->
    (
    (((isS_RAMPING(s)) && (isActPhase(phase)) && (cq(aArgs) == []) && (pc(aArgs) == 0)) ->
        sum c:Command.(
           rc(parent, self, c).
           (isC_STANDBY(c) -> RPC_Wheel_CLASS(self, parent, s, chs, phase, update_pc(aArgs, 1)) <> (
           (isC_OFF(c) -> RPC_Wheel_CLASS(self, parent, s, chs, phase, update_pc(aArgs, 2)) <> (
           (isC_ON(c) -> RPC_Wheel_CLASS(self, parent, s, chs, phase, update_pc(aArgs, 3)) <> (
           ss(self, parent, s).
    ignored_command(self, c).
    RPC_Wheel_CLASS(self, parent, s, chs, phase, update_pc(aArgs, -1))))))))
    )) +

    (
    (((isS_RAMPING(s)) && (isActPhase(phase)) && (cq(aArgs) == [])) ->
    ((
    ((pc(aArgs) == 1) ->
         queue_messages(self).
         (RPC_Wheel_CLASS(self, parent, s, chs, phase,
                    actArgs(send_command(C_STANDBY,
                                         chs), [],  -1, rsc(aArgs)))))
    ))) +

    ((
    (((isS_RAMPING(s)) && (isActPhase(phase)) && (cq(aArgs) == [])) ->
    ((
    ((pc(aArgs) == 2) ->
         queue_messages(self).
         (RPC_Wheel_CLASS(self, parent, s, chs, phase,
                    actArgs(send_command(C_OFF,
                                         chs), [],  -1, rsc(aArgs)))))
    ))) +

    ((
    (((isS_RAMPING(s)) && (isActPhase(phase)) && (cq(aArgs) == [])) ->
    ((
    ((pc(aArgs) == 3) ->
         queue_messages(self).
         (RPC_Wheel_CLASS(self, parent, s, chs, phase,
                    actArgs(send_command(C_ON,
                                         chs), [],  -1, rsc(aArgs)))))
    ))) +

    (delta))))))
    ))

    ) +

    (
    sum id:Id.(sum s1:State.(((isActPhase(phase)) && (is_child(id, chs)) &&
                             (pc(aArgs) == 0) && (initialized(chs))) ->
       rs(id, self, s1).
       move_phase(self, WhenPhase).
       RPC_Wheel_CLASS(self, parent, s, update_busy(id, false, update_state(id, s1, chs)), WhenPhase, reset(aArgs)))) +

    sum id:Id.(sum s1:State.(((isActPhase(phase)) && (is_child(id, chs)) &&
           ((pc(aArgs) > 0) ||
            ((pc(aArgs) == -1) && (cq(aArgs) != [])))) ->
       rs(id, self, s1).
       RPC_Wheel_CLASS(self, parent, s,
                     update_busy(id,
                                 false,
                                 update_state(id, s1, chs)),
                     phase, aArgs))) +

    ((isActPhase(phase)) && (cq(aArgs) != []) && (!(initialized(chs)))) ->
        sc(self, id(head(cq(aArgs))), command(head(cq(aArgs)))).
            RPC_Wheel_CLASS(self, parent, s,
                          update_busy(id(head(cq(aArgs))), true, chs),
                          phase,
                          actArgs(tail(cq(aArgs)),
                                  (id(head(cq(aArgs))))|>(nrf(aArgs)), pc(aArgs), rsc(aArgs))) +

    ((isActPhase(phase)) && (cq(aArgs) != []) && initialized(chs)) ->
        sc(self, id(head(cq(aArgs))), command(head(cq(aArgs)))).
            RPC_Wheel_CLASS(self, parent, s,
                          update_busy(id(head(cq(aArgs))), true, chs),
                          phase,
                          actArgs(tail(cq(aArgs)), [], pc(aArgs), rsc(aArgs))) +

    sum id:Id.(sum s1:State.(
            ((isActPhase(phase)) && (cq(aArgs) == []) && (nrf(aArgs) != []) && (is_child(id, chs)) &&
             (!(initialized(chs)))) ->
                    rs(id, self, s1).
                    ((initialized(update_state(id,s1,chs))) ->
                         end_initialization(self).
                         RPC_Wheel_CLASS(self, parent, s, update_state(id,s1,chs), phase,
                                       actArgs(cq(aArgs), remove(id, nrf(aArgs)), -1, rsc(aArgs)
                                       )) <>
                         RPC_Wheel_CLASS(self, parent, s, update_state(id,s1,chs), phase,
                                       actArgs(cq(aArgs), remove(id, nrf(aArgs)), -1, rsc(aArgs)
                                       ))))) +

    ((isActPhase(phase)) && (cq(aArgs) == []) && (initialized(chs)) && (pc(aArgs) == -1)) ->
        move_phase(self, WhenPhase).
        RPC_Wheel_CLASS(self, parent, s, chs, WhenPhase, reset(aArgs))

    );

init
    allow({cs, cc, move_state, move_phase, ignored_command,
           queue_messages, enter_then_clause, enter_else_clause,
           skip_then_clause, start_initialization, end_initialization,
           noop_statement},
    comm({rs|ss -> cs, rc|sc -> cc},
    RPC_Wheel_CLASS(1, 1, S_OFF, [],
                    ActionPhase,actArgs([], [], 0, false))));
\end{listing}
\fi
\end{document}